\newcommand{\BN}{\mathbb{N}}
\newcommand{\BR}{\mathbb{R}}
\newtheorem{theorem}{Theorem}
\newtheorem{lemma}[theorem]{Lemma}
\newtheorem{corollary}[theorem]{Corollary}
\newtheorem{remark}[theorem]{Remark}
\title{Deterministic $O(1)$-Approximation Algorithms to 1-Center Clustering with Outliers}
\author{Shyam Narayanan\\
\small Harvard University\\[-0.8ex] 
\small\tt shyam.s.narayanan@gmail.com
}
\begin{document}

\maketitle

\begin{abstract}
    The {\em 1-center clustering with outliers} problem asks about identifying a prototypical robust statistic that approximates the location of a cluster of points.  Given some constant $0 < \alpha < 1$ and $n$ points such that $\alpha n$ of them are in some (unknown) ball of radius $r,$ the goal is to compute a ball of radius $O(r)$ that also contains $\alpha n$ points.  This problem can be formulated with the points in a normed vector space such as $\mathbb{R}^d$ or in a general metric space.
    
    The problem has a simple randomized solution: a randomly selected point is a correct solution with constant probability, and its correctness can be verified  in linear time. However, the {\em deterministic} complexity of this problem was not known. In this paper, for any $\ell_p$ {\em vector space}, we show an $O(nd)$-time solution with a ball of radius $O(r)$ for a fixed $\alpha > \frac{1}{2},$ and for any {\em normed vector space}, we show an $O(nd)$-time solution with a ball of radius $O(r)$ when $\alpha > \frac{1}{2}$ as well as an $O (nd \log^{(k)}(n))$-time solution with a ball of radius $O(r)$ for all $\alpha > 0, k \in \mathbb{N},$ where $\log^{(k)}(n)$ represents the $k$th iterated logarithm, assuming distance computation and vector space operations take $O(d)$ time.  For an {\em arbitrary metric space}, we show for any $C \in \mathbb{N}$ an $O(n^{1+1/C})$-time solution that finds a ball of radius $2Cr,$ assuming distance computation between any pair of points takes $O(1)$-time. Moreover, this algorithm is optimal for general metric spaces, as we show that for any fixed $\alpha, C,$ there is no $o(n^{1+1/C})$-query and thus no $o(n^{1+1/C})$-time solution that deterministically finds a ball of radius $2Cr$.
\end{abstract}
 
\section{Introduction}

Data clustering that is tolerant to outliers is a well-studied task in machine learning and computational statistics. 
In this paper, we deal with one of the simplest examples of this class of problems: {\em 1-center clustering with outliers}.  Informally, given $n$ points such that there exists an unknown ball of radius $r$ containing most of the points, we wish to find a ball of radius $O(r)$ also containing a large fraction of the points. More formally, suppose $0 < \alpha < 1$ is some fixed constant.  Given points $a_1, ..., a_n$ in space $\BR^d$ (where points are given as coordinates) under an $\ell_p$ norm for some $p \ge 1$, in some other normed vector space, or in an arbitrary metric space (where we just have access to distances), suppose we know there exists a ball of radius $r$ containing at least $\alpha n$ points but do not know the location of the ball.  Then, can we efficiently provide a $C$-approximation to finding the ball, i.e. find the center of a ball of radius $Cr$ for some $C \ge 1$ containing at least $\alpha n$ points?

The problem has a simple linear-time Las Vegas randomized algorithm: a randomly selected point is a correct solution with constant probability, and its correctness can be verified in linear time. In fact, an even faster randomized algorithm works by picking $O(1)$ points randomly, computing pairwise distances, and selecting a cluster if it exists. However, the {\em deterministic} complexity of this problem appears more intriguing, and to the best of our knowledge, no linear-time or even subquadratic-time (let alone simple) solution for this problem was known.  A trivial quadratic-time algorithm exists by enumerating over all points and checking pairwise distances, so the goal of the paper is to obtain deterministic algorithms whose running time is faster than the above. This situation bears similarity to the closely related {\em 1-median} problem, where given a set of points $a_1, ..., a_n$ we want to find a point $p^*$ that (approximately) minimizes the sum of the distances between $p^*$ and all $a_i$'s. It is a folklore fact that a randomly selected point  is a $2(1+\epsilon)$-approximate $1$-median with probability at least $\frac{\epsilon}{1+\epsilon}$. However, in the deterministic case for an arbitrary metric space, no constant-factor approximation in linear time is possible \cite{Lowerbound1, Lowerbound2}, and non-trivial tradeoffs between the approximation factor and the running time exist~\cite{Upperbound1, Upperbound2}. The goal of this paper is to establish an analogous understanding of the deterministic complexity of 1-center clustering with outliers.

\subsection{Main results}

Our results are depicted in Table~\ref{table}. They primarily fall into two main categories: results in normed vector spaces and results in arbitrary metric spaces.  For $\BR^d$ with the $\ell_p$ norm, assuming we are given coordinates of points, our algorithm runs in $O(nd)$ time with an $O((\alpha-0.5)^{-1/p})$-approximation, assuming $\alpha > \frac{1}{2}.$  Such a runtime even for the Euclidean case was previously unknown.  For arbitrary normed vector spaces, our algorithm runs in $O_{\alpha}(nd)$ time with an $O((\alpha-0.5)^{-1})$-approximation whenever $\alpha > 0.5,$ assuming that distance calculation, vector addition, and vector multiplication can be done in $O(d)$ time.  For $0 < \alpha \le 0.5,$ we solve the problem for arbitrary normed vector spaces in $O_{\alpha, k} (nd \log^{(k)}(n)) = O_{\alpha, k} (nd \log \log ... \log n)$ time for any integer $k$.

For arbitrary metric spaces, assuming distance calculation takes $O(1)$ time, we give an $O_{\alpha, C}(n^{1+1/C})$-time algorithm with approximation constant $2C.$  While this is much weaker than for normed vector spaces, this runtime is actually tight for a fixed $\alpha, C$, as there is no $o(n^{1+1/C})$-time algorithm with approximation constant $2C$ that works for an arbitrary metric space.  In particular, this implies there is no $O(n \text{ polylog } n)$-time solution to solve the general metric space problem, even for large $\alpha$ and $C$.

As a note, subscripts of $\alpha, k,$ and $C$ on our $O$ and $\Omega$ factors mean that the constants may depend on $\alpha, k,$ and $C,$ but are bounded by some function of the subset of $\alpha^{-1}, k, C$ in the subscript.

\vskip 0.6cm

\begin{table}[htbp]
\begin{center}
\begin{tabular}{ |p{1.7cm}|p{2cm}|p{2.75cm}|p{2.8cm}|p{3.45cm}| }
 \hline
 Space & Assumptions & Runtime & Approximation & Comments\\
 \hline
 \hline
$\ell_p$ normed & $\alpha > \frac{1}{2}$ & $O(nd)$ & $O\left((\alpha-0.5)^{-1/p}\right)$ & Implies Euclidean\\ \hline
Normed & $\alpha > \frac{1}{2}$ & $O_{\alpha}(nd)$ & $O\left((\alpha-0.5)^{-1}\right)$ & \\ \hline
Normed & $\alpha > 0$ & $O_{\alpha, k}(nd \log^{(k)}(n))$ & $O_{\alpha, k}(1)$ & Implies for $\ell_p$ space, \newline $k$ any positive integer \\ \hline
Metric & $\alpha > 0$ & $O_{\alpha, C}(n^{1+1/C})$ & $2C$ & Can be done even if the radius is unknown, \newline $C$ any positive integer\\ \hline
Metric & $\alpha > 0$ & $\Omega_{\alpha, C}(n^{1+1/C})$ & $2C$ & Adversary from metric $1$-median lower bound \\ \hline
\end{tabular}
\caption{Our results}
\label{table}
\end{center}
\end{table}
\vspace{-2em} 

\subsection{Motivation and Relation to Previous Work}

1-center clustering with outliers is a very simple example of a robust statistic, i.e. its location is usually resistant to large changes to a small fraction of the data points.  Robust statistics are reviewed in detail in \cite{RobustStatisticsBook}.  When $\alpha > \frac{1}{2},$ addition of a large number of points does not change the statistic up to $O(r),$ as it only slightly decreases the value of $\alpha.$  Even if $\alpha < \frac{1}{2},$ the statistic is still robust as if we find some ball containing $\alpha n$ points that are disjoint from the intended ball, we can remove those points and now there is some ball with at least $\alpha' = \frac{\alpha}{1-\alpha}$ of the remaining points which we need to get close to, so inducting on $\lfloor \alpha^{-1} \rfloor$ shows that the statistic is robust.

Robust statistics have a lot of practical use in statistics and machine learning \cite{CharikarSteinhardtValiant, DiakonikolasEtAl}.  Since machine learning often deals with large amounts of data, it is difficult to obtain a large amount of data with high accuracy in a short period of time.  Therefore, if we can compute a robust statistic quickly, we can get more data in the same amount of time and have a good understanding of the approximate location of a good fraction of the data.

This question is valuable from the perspective of derandomization.  One solution to the 1-center clustering problem is to randomly select a point and check if it is at most $2r$ away from $\alpha n - 1$ other points, and repeat the process if it fails.  This algorithm is efficient and gets a ball of radius $2r$ with $\alpha n$ points after $O(\alpha^{-1} n)$ expected computations, but is a Las Vegas algorithm that can be slow with reasonable probability.  A faster Monte Carlo algorithm involves choosing an $O(1)$-size subset of the points and running the brute force quadratic algorithm, though similarly this algorithm may fail with reasonable probability.  Therefore, this problem relates to the question of the extent to which randomness is required to solve certain computational problems.

The Euclidean problem is useful in the amplification of an Approximate Matrix Multiplication (AMM) algorithm described in \cite{ClarksonWoodruff2009}.  To compute $A^T B$ up to low Frobenius norm error with probability $2/3$ in low time and space, the algorithm approximates $A^T B$ as $C = (SA)^T (SB),$ where $S$ is a certain randomized sketch matrix.  Then, if this process is repeated $O(\log \delta^{-1})$ times to get $C_1, ..., C_{O(\log \delta^{-1})}$, with probability $1-\delta,$ at least $3/5$ of the $C_i$'s satisfy $||C_i - A^T B||_F \le \epsilon ||A||_F ||B||_F.$  We are able to approximate $||A||_F ||B||_F$ with high probability using $L_2$ approximation algorithms from \cite{AMS}.  If we think of $C_i$ and $A^T B$ as vectors, at least $3/5$ of them are in a ball of radius $r = \epsilon ||A||_F ||B||_F$ with probability $1-\delta$.  To approximate the center of this ball, i.e. $A^T B,$ they use the Las Vegas algorithm.  If we only assume that at least $3/5$ of the vectors are in a ball of radius $r,$ approximating the ball this way with probability $1-\delta$ requires $\Omega((\log \delta^{-1})^2)$ pairwise distance computations and thus $\Omega(d (\log \delta^{-1})^2)$ time where $d$ is the dimension of $A^T B$ as a vector.  However, Theorem \ref{EuclideanAboveHalf} gives a method that only requires $O(\log \delta^{-1})$ distance computations and $O(d \log \delta^{-1})$ time, thus making amplification of the error for this AMM algorithm linear in $\log \delta^{-1}.$

1-center clustering with outliers is also related to the standard 1-center problem (without outliers), which asks for a point $p$ that minimizes $\max_i \rho(p, a_i),$ where $\rho$ denotes distance \cite{1Center1}.  1-center with outliers has been studied, e.g., in \cite{Streaming1Center}, but under the assumption that the number of outliers is $o(n),$ instead of up to $(1-\alpha) n$.  The $1$-center and $1$-center with outliers problems also have extensions to $k$-center \cite{ChakrabartyEtAl} and $k$-center with outliers \cite{McCutchenEtAl, CharikarEtAl}, where there are up to $k$ allowed covering balls.  It also relates to the geometric 1-median approximation problem, which asks, for a set of points $a_1, ..., a_n,$ for some point $p^*$ such that
\[\sum\limits_{i = 1}^{n} \rho(p^*, a_i) \le C \cdot \min\limits_{p} \sum\limits_{i = 1}^{n} \rho(p, a_i),\]
i.e. finding a $C$-approximation to the geometric 1-median problem.
The geometric 1-median problem has been studied in detail, though usually focusing on randomized $(1+\epsilon)$-approximation algorithms in Euclidean space \cite{CohenEtAl, ChinCitation7}.  For the deterministic case, the centroid of all points in Euclidean space is known to be a $2$-approximation to geometric $1$-median, but in an arbitrary metric space, there exist tight upper \cite{Upperbound1, Upperbound2} and lower time bounds \cite{Lowerbound1, Lowerbound2} for all $C$.  The geometric 1-median problem is closely related to the 1-center clustering with outliers problem, and we show in Section \ref{SectionMetricUpper} that the adversary used for the proof of the lower bound for geometric $1$-median can establish an analogous lower bound for $1$-center clustering with outliers.  We thus establish tight upper and lower bounds for 1-center clustering with outliers in general metric space.

As a remark, our Theorem \ref{NormedVectorSpaces} uses an idea of deleting points that are far apart from each other, which is similar to certain ideas for insertion-only $\ell_1$-heavy hitters algorithms by Boyer and Moore and by Misra and Gries \cite{BoyerMoore, MisraGries}, in which seeing many distinct elements results in a similar deletion process.

\subsection{Notation}

For many of our proofs, we deal with a weighted generalization of the problem, defined as follows.  Let $\alpha$ and $a_1, ..., a_n$ be as in the original problem statement, but now suppose each $a_i$ has some weight $w_i \ge 0$ such that $w_1+...+w_n > 0.$  Furthermore, assume there is a ball of radius $r$ containing some points $a_{i_1},...,a_{i_s}$ such that $w_{i_1}+...+w_{i_s} \ge \alpha(w_1+...+w_n).$   The goal is then to find  a ball of radius $O(r)$ containing points $a_{j_1}, ..., a_{j_t}$ such that $w_{j_1}+...+w_{j_t} \ge \alpha (w_1+...+w_n)$, which we call containing at least $\alpha(w_1+...+w_n)$ weight.

Given points $a_1, ..., a_n$ with weights $w_1, ..., w_n,$ we let $w = \sum_{1 \le i \le n} w_i$, i.e. the total weight.  For any set $S \subset [n],$ let $a_S = \{a_i: i \in S\}$ and let $w_S = \sum_{i \in S} w_i$.  For some results, we define a new set of points $q_1, ..., q_m$ with weights $v_1, ..., v_m,$ so we will use the terms ``$w$-weight'' and ``$v$-weight'' accordingly if necessary.  Similarly for any set $S \subset [m],$ let $q_S = \{q_i: i \in S\}$ and let $v_S = \sum_{i \in S} v_i.$

For computing distances, $||x-y||$ denotes distance in a normed vector space, and $\rho(x, y)$ denotes distance in an arbitrary metric space.

Since $\alpha,$ the fraction of points or weight in the ball of radius $r$, is variable, we define the problem $1$-center clustering with approximation constant $C$ and fraction $\alpha$ as the problem where if there is a ball of radius $r$ containing $\alpha n$ points (or $\alpha w$ weight), we wish to explicitly find a ball of radius $Cr$ with the same property.

Finally, for any function $f$ in this paper, the following assumptions are implicit: $f$ is nondecreasing, $f(n) \ge 1,$ $f(n) = O(n),$ and $f(an) \le a f(n)$ for any $a \ge 1, n \in \BN$.

\subsection{Proof Ideas}

While many of our proofs assume the weighted problem, we assume the unweighted problem here for simplicity.  %This is a very minor issue, since the weighted and unweighted problems are almost equivalent, by Lemma \ref{WeightToNoWeight}.

The algorithm for the $\ell_p$ normed vector space simply returns the point whose $i$th coordinate is the median of the $i$th coordinates of $a_{[n]}$.  The proof is done shortly and is quite brief, so it is not included in this section.  We now describe the algorithm intuition for normed vector spaces when $\alpha > \frac{1}{2}.$  Our goal is to reduce the $n$ point problem into an $n/2$ point problem in $O_{\alpha}(nd)$ time, which means the overall runtime is $O_{\alpha}(nd)$.  To do this, we divide the $n$ points into $n/2$ pairs of points just by grouping the first two points, then the next two, and so on.  The idea is that when two points are far away, i.e. more than $2r$ apart, at most one of them can actually be in our ball $B$, so deleting both of them still means at least $\alpha$ of the points are in the ball of radius $r$.  However, when the two points are within $2r$ of each other, we ``join'' the points by pretending the second point is at the location of the first point, though as a result now we are only guaranteed a ball of radius $3r$ concentric with $B$ having $\alpha$ of the points, because we may join a point in the ball with a point close to the ball but not in it.  This means if we have a $C$ approximation for $n/2$ points, we can get a $3C$-approximation for $n$ points, since every remaining pair has the points in the same location so we keep only one point from each pair.  However, to go from a ball of radius $3Cr$ to a ball of radius $Cr,$ we look at the original set of points and take the centroid of all the points in the ball of radius $3Cr$.  The ball of radius $r$ containing at least $\alpha n$ points will cause the centroid to move closer to the ball, assuming $C$ is not too small.  We may have to repeat the process several times with smaller balls until we get sufficiently close, i.e. back to less than $Cr$ away from $B$, but this only requires $O_{\alpha}(1)$ iterations and thus $O_{\alpha}(nd)$ total time.

Unfortunately, for normed vector spaces when $\alpha \le \frac{1}{2},$ the centroid of the points within a certain radius may not be closer to the desired ball.
The idea to fix this is to assume that $B$ has at least $\alpha n$ more points than $B^C \backslash B$ for a certain constant $C,$ where for any $A$, $B^A$ is the ball of radius $Ar$ concentric with $B$.  Then, if we split the points into two halves, at least one half satisfies the same property.  Suppose that given $n/2$ points with this property we can find a ball of radius $Kr$ that not only contains at least $\alpha n$ points but also intersects $B$, for some $K \le \frac{C-3}{2}.$  Then, the ball of radius $(K+2)r$ around one of these points contains $B$ but is contained in $B^{C},$ so if we restrict to the ball of radius $(K+2)r$ around that point, at least $\frac{1+\alpha}{2}$ of the remaining points are in $B$, which has $\alpha n$ points.  Now, use the previous algorithm with some constant which is at least $\frac{1+\alpha}{2} > \frac{1}{2}$ to find a ball of radius $Kr$ with $\alpha n$ points, where we make sure $K$ is not too small.  However, there is an issue of multiple completely disjoint balls of radius $O(r)$, each having at least $\alpha n$ points, as $\alpha < \frac{1}{2}.$  To salvage this, we have to first find a ball of radius $Kr$ containing $\alpha n$ points, then remove the points in the ball and repeat the procedure with a higher value of $\alpha$, in case the ball we found does not actually intersect $B$.  Overall, this happens to make the runtime $O(nd \text{ polylog } n).$  One issue is that we don't know whether there is some $B$ that contains at least $\alpha n$ more points than $B^C \backslash B,$ but if there were some $B$ of radius $r$ that contains at least $\alpha n$ total points, for some $b = O(\log \alpha^{-1})$, $B^{C^b}$ contains at least $\frac{\alpha}{2} n$ more points than $B^{C^{b+1}} \backslash B^{C^{b}},$ or else the number of points would become too large.  Therefore, we attempt the procedure with fraction $\frac{\alpha}{2}$ for radius $r$, radius $Cr,$ radius $C^2r,$ and so on until $C^{O(\log \alpha^{-1})} r.$  Finally, we can go from $nd \text{ polylog } n$ to $nd \log^{(k)} n$ using a brute force divide and conquer.  Namely, if we can solve the problem in time $nd f(n),$ split the points into buckets of size $f(n),$ run the algorithm on each bucket, perhaps with a smaller value of $\alpha,$ and return $O(\frac{n}{f(n)})$ points in time $O(nd f(f(n)))$.  If we choose the points well, we get that most of the chosen points will be at most $Cr$ away from our desired ball $B$, so with a larger constant on the order of $C^2$, we can run the algorithm on the $O(\frac{n}{f(n)})$ points, which takes $O(nd)$ time.  We can repeat the procedure to get $O(nd f^{(k)}(n))$ for any $k$, though $C$ may become very large.

Our metric space bound ideas are almost identical in the cases of $\alpha > \frac{1}{2}$ and $\alpha \le \frac{1}{2},$ except for the issue that when $\alpha \le \frac{1}{2},$ we run into issues of finding a ball of radius $Cr$ with $\alpha n$ points that isn't near the desired ball of radius $r$ and $\alpha n$ points.  This issue is fixed by ideas of removing the points in the ball of radius $Cr$ and retrying the algorithm for a larger value of $\alpha$ if necessary.  For simplicity we assume $\alpha > \frac{1}{2}$.

For metric space upper bounds, one can use brute force divide and conquer.  Suppose in time $O(n^{1+1/K})$ we can solve the problem with approximation constant $C$.  Then, split the $n$ points into blocks of size $n^{K/(K+1)}$.  If we let the $i$th block be called $D_i,$ then some block must have at least $\alpha |D_i|$ points.  Therefore, if we run the algorithm on all blocks, which takes $O(n \cdot (n^{K/(K+1)})^{1/K}) = O(n^{1+1/(K+1)})$ time, for at least one block we will get a point at most $Cr$ away from $B$, which means the ball of radius $(C+2)r$ from some point must contain $B$ and thus at least $\alpha n$ total points.  There are $O(n^{1/(K+1)})$ points we have to check, each of which takes $O(n)$ time to verify, so we will find a point such that the ball of radius $(C+2)r$ contains at least $\alpha n$ total points in $O(n^{1+1/(K+1)})$ time.  As $\alpha > \frac{1}{2},$ this ball by default intersects any ball of radius $r$ with at least $\alpha n$ points.  Therefore, if we can solve the problem with approximation constant $C$ in $O(n^{1+1/K})$ time, we can solve the problem with constant $C+2$ in time $O(n^{1+1/(K+1)}),$ since the divide and conquer procedure and checking both take $O(n^{1+1/(K+1)})$ time.  Since a $2$-approximation in $n^2$ time is trivial, this should give a $2C$ approximation in $O(n^{1+1/C})$ time.  See Figure \ref{MetricSpacePicture} for an example when $C = 2.$

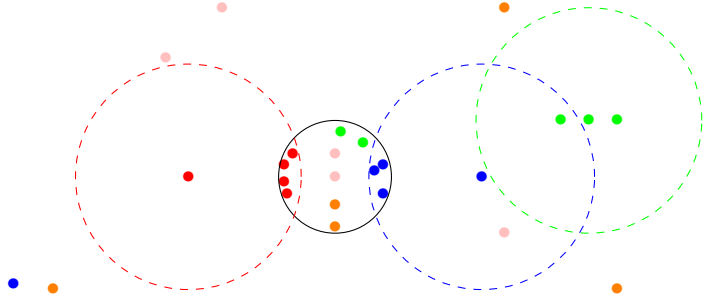
\begin{figure}
\centering
\begin{tikzpicture}[x=0.75cm,y=0.75cm]
\foreach \Point in {(-0.9, 0.2), (-0.75, 0.4), (-0.85, -0.3), (-0.9, -0.1), (-2.6, 0)}{
    \node [red] at \Point {\textbullet};
}
\foreach \Point in {(0.85, 0.2), (0.85, -0.3), (0.7, 0.1), (2.6, 0), (-5.7, -1.9)}{
    \node [blue] at \Point {\textbullet};
}
\foreach \Point in {(0.1, 0.8), (0.5, 0.6), (5, 1.0), (4, 1.0), (4.5, 1.0)}{
    \node [green] at \Point {\textbullet};
}
\foreach \Point in {(0, 0), (0, 0.4), (-2, 3), (-3, 2.1), (3, -1)}{
    \node [pink] at \Point {\textbullet};
}
\foreach \Point in {(0, -0.9), (0, -0.5), (3, 3), (-5, -2), (5, -2)}{
    \node [orange] at \Point {\textbullet};
}

\draw (0, 0) circle (0.75cm);

\draw[red, dashed] (-2.6, 0) circle (1.5cm);
\draw[blue, dashed] (2.6, 0) circle (1.5cm);
\draw[green, dashed] (4.5, 1) circle (1.5cm);
\end{tikzpicture}
\caption{Here is an example for $n = 25, \alpha = 13/25 = 0.52,$ and $C = 2.$  We split the $n = 25$ points into $\sqrt{n} = 5$ buckets of $\sqrt{n} = 5$ points each, color coded red, blue, green, pink, and orange.  The black circle represents the desired ball $B$ of radius $r$.  By brute force we try to find a ball of radius $2r$ containing at least an $\alpha$ fraction for each color, and succeed for red, blue, and green (represented by dashed circles).  It takes $O(\sqrt{n}^2) = O(n)$ time to try for each color, so the total time for this is $O(n \sqrt{n}).$  However, at least an $\alpha$ fraction of points of some color (in this case red) must be in $B$ by Pigeonhole, so the brute force algorithm must succeed in finding a ball of radius $2r$ containing an $\alpha$ fraction of the red points, and since $\alpha > 1/2,$ the radius $2r$ ball must contain some point in $B$ and thus must intersect $B$.  This means the ball of radius $4r$ concentric with the dashed red circle must contain $B$ by the triangle inequality, and thus has at least $\alpha n$ points.  We can check this for any ball in $O(n)$ time and there are at most $\sqrt{n}$ balls to check, so the total time for this is $O(n \sqrt{n})$.}
\label{MetricSpacePicture}
\end{figure}

The metric space lower bound comes from a lower bound by Chang on geometric $1$-median in metric space \cite{Lowerbound1}.  The paper by Chang constructs an algorithm for an adversary $Adv$ such that for any algorithm $A$ that only uses $o_{C}(n^{1+1/C})$ queries distances between pairs of vertices $a_i$ and $a_j$ where $A$ finally returns some point $a_z$, the adversary can adaptively create a metric space.  This metric space has distances $a_1, ..., a_n$ satisfy all metric space properties, returns correct distances between any two points queried, and such that $a_z$ is not a $2C$-approximation to geometric $1$-median.  However, analysis provided by Chang can be used to prove that this point $a_z$ is also not a $2C$-approximation for $1$-center clustering with outliers.  As a result, we not only get an $\Omega(n^{1+1/C})$-time lower bound for $2C$-approximation, but also a stronger $\Omega(n^{1+1/C})$-query lower bound.

\section{Normed Vector Space Algorithms: $\alpha > 1/2$} \label{NormedVectorSpaces}

For $\ell_p$ norms over $\BR^d$, there exists a straightforward algorithm.  Assume we are given the points $a_1, ..., a_n$ with weights $w_1, ..., w_n$ such that $(a_j)_i$ is the $i$th coordinate of $a_j.$  Then, consider the point $x = (x_1, ..., x_d)$ such that $x_i$ is the weighted median of $(a_1)_i, ..., (a_n)_i$ where $(a_j)_i$ has weight $w_j.$  Weighted median finding is known to take $O(n)$ time, so $x$ can be found in $O(nd)$ time.  Clearly, if there is a ball of radius $r$ around some $q$ with $\alpha w$ weight, where $\alpha > \frac{1}{2},$ then clearly $|q_i-x_i| \le r$ for each $i$, so $||q-x||_p \le r \cdot d^{1/p}.$  However, we can actually get another more valuable bound.

\begin{theorem} \label{MedianOfEachCoordinate}
    If $q$ is a point such that $B(q),$ the $\ell_p$-norm ball of radius $r$ around $q$, contains $\alpha w$ weight for some $\alpha > \frac{1}{2},$ then $||x-q||_p \le \left(\frac{\alpha}{\alpha-1/2}\right)^{1/p} r,$ implying an $O(nd)$ time solution with fraction $\alpha$ and approximation constant $O((\alpha-1/2)^{-1/p})$.
\end{theorem}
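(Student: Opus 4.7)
The plan is to bound each coordinate-wise deviation $\delta_i := |x_i - q_i|$ through a weight-counting argument, then sum the $p$-th powers of those bounds and invoke the fact that points in $B(q)$ have $\ell_p$ distance at most $r$ from $q$.

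First I would fix a coordinate $i$ and, without loss of generality, assume $x_i \ge q_i$. By definition of the weighted median, the set $U_i = \{j : (a_j)_i \ge x_i\}$ has total weight at least $w/2$. Let $V = \{j : a_j \in B(q)\}$, so $w_V \ge \alpha w$. Inclusion-exclusion gives
\[
w_{U_i \cap V} \;\ge\; w_{U_i} + w_V - w \;\ge\; \tfrac{1}{2} w + \alpha w - w \;=\; \bigl(\alpha - \tfrac{1}{2}\bigr) w.
\]
For every $j \in U_i \cap V$, one has $(a_j)_i - q_i \ge x_i - q_i = \delta_i$, so $|(a_j)_i - q_i|^p \ge \delta_i^p$. (The sign-flipped case when $x_i < q_i$ is handled symmetrically using $\{j : (a_j)_i \le x_i\}$.)

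Next I would double-count the quantity
\[
\Sigma \;=\; \sum_{i=1}^{d}\,\sum_{j \in V}\, w_j \,|(a_j)_i - q_i|^p.
\]
From the coordinate-wise lower bound above, $\Sigma \ge (\alpha - \tfrac{1}{2}) w \sum_i \delta_i^p$. On the other hand, interchanging the order of summation and using that $j \in V$ means $\|a_j - q\|_p^p \le r^p$,
\[
\Sigma \;=\; \sum_{j \in V} w_j \|a_j - q\|_p^p \;\le\; r^p \, w_V \;\le\; r^p \,\alpha\, w.
\]
Combining the two inequalities yields $\sum_i \delta_i^p \le \frac{\alpha}{\alpha - 1/2} \, r^p$, and taking $p$-th roots gives the claimed bound $\|x - q\|_p \le \bigl(\tfrac{\alpha}{\alpha - 1/2}\bigr)^{1/p} r$.

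There is no real obstacle here; the only subtlety is being careful about the case $x_i = q_i$ (trivial) and ensuring the weighted median definition used (``at least $w/2$ weight on each side of $x_i$'') is the one that produces the clean intersection bound. The runtime claim is immediate from the remark preceding the theorem that $x$ is computable in $O(nd)$ time via $d$ calls to weighted median selection, and the approximation constant $O((\alpha - 1/2)^{-1/p})$ is exactly what the inequality above provides.
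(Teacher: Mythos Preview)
Your approach is essentially the same double-counting argument as the paper's, and the logic is sound except for one inequality that is written the wrong way around. In the upper bound on $\Sigma$ you write
\[
\Sigma \;\le\; r^p\, w_V \;\le\; r^p\, \alpha\, w,
\]
but the hypothesis is $w_V \ge \alpha w$, not $w_V \le \alpha w$, so the second inequality is false in general. As written, your two bounds combine to give $\sum_i \delta_i^p \le \dfrac{w_V}{(\alpha - 1/2)\,w}\, r^p$, and since $w_V$ can be as large as $w$, this only yields $\|x-q\|_p \le (\alpha - 1/2)^{-1/p} r$, not the sharper constant $\bigl(\tfrac{\alpha}{\alpha - 1/2}\bigr)^{1/p}$ claimed in the theorem.

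The fix, which is exactly what the paper does, is to let $\beta := w_V/w \ge \alpha$ and use $\beta$ consistently on both sides: the inclusion-exclusion step then gives $w_{U_i \cap V} \ge (\beta - \tfrac12)w$, the upper bound gives $\Sigma \le \beta w\, r^p$, and combining yields $\|x-q\|_p^p \le \dfrac{\beta}{\beta - 1/2}\, r^p$. Finally observe that $t \mapsto \dfrac{t}{t - 1/2}$ is decreasing for $t > \tfrac12$, so $\dfrac{\beta}{\beta - 1/2} \le \dfrac{\alpha}{\alpha - 1/2}$, which recovers the stated bound. With this one correction your argument matches the paper's proof.
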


\begin{proof}
    Let $(q_1, ..., q_d)$ be the coordinate representation of $q$, and assume WLOG that $q_i \le x_i$ for each $1 \le i \le d$.  Suppose $B(q)$ contains exactly $\beta w$ weight, where $\beta \ge \alpha.$  If we let $(a_j)_{i}$ denote the $i$th coordinate of point $a_j,$ the set of points in $\{a_1, ..., a_n\} \cap B(q)$ with $(a_j)_{i} \ge x_i$ have at least $(\beta-1/2) w$ weight, as $x_i$ is the weighted median of the $i$th coordinate of all $n$ points.  Therefore,
\[\sum\limits_{a_j \in B(q)} w_j |(a_j)_i-q_i|^p \ge \left(\beta - \frac{1}{2}\right) w (x_i-q_i)^p\]
    for each $i$, meaning that if we sum over all $i$,
\[\sum\limits_{a_j \in B(q)} w_j ||a_j-q||_p^p = \sum\limits_{i = 1}^{d} \sum\limits_{a_j \in B(q)} w_j |(a_j)_i-q_i|^p\]
\[\ge \sum\limits_{i = 1}^{d} \left(\beta - \frac{1}{2}\right) w (x_i-q_i)^p = \left(\beta - \frac{1}{2}\right) w ||x-q||_p^p.\]
    However, $a_j \in B(q)$ means $||a_j-q||_p^p \le r^p,$ and as the weight of points in $B(q)$ equals $\beta w,$ 
\[(\beta w) \cdot r^p \ge \sum\limits_{a_j \in B(q)} w_j ||a_j-q||_p^p \ge \left(\beta - \frac{1}{2}\right) w ||x-q||_p^p\]
    which implies that
\[||x-q||_p \le \left(\frac{\beta}{\beta - \frac{1}{2}}\right)^{1/p} r \le \left(\frac{\alpha}{\alpha - \frac{1}{2}}\right)^{1/p}r.\]
    Thus, the ball of radius $\left(\left(\frac{\alpha}{\alpha-1/2}\right)^{1/p} +1\right)r$ around $x$ contains $B(q)$, and therefore contains at least $\alpha w$ weight.
\end{proof}

The above algorithm does not work for any normed vector space.  Specifically, in Appendix \ref{MedianFail} we show that we do not always get an $O(1)$-approximation in the vector space of $\sqrt{d} \times \sqrt{d}$ matrices with operator norm distance.  However, we next present a more complicated algorithm that succeeds for any normed vector space.  It runs in $O_{\alpha}(nd)$ time for any normed vector space with fraction $\alpha > \frac{1}{2}$ and approximation constant $O((\alpha-1/2)^{-1})$, if distances and vector addition/scalar multiplication can be computed in $O(d)$ time, which is true for $\BR^d$ with an $\ell_p$ norm, or for $d^{\alpha} \times d^{\alpha}$-dimensional matrices with operator norm distances, where $\alpha$ is the inverse matrix multiplication constant, currently known to be at least $0.4214$ \cite{MatrixMultiplication}.

\begin{theorem} \label{EuclideanAboveHalf}
    For $\alpha > \frac{1}{2},$ in any normed vector space, if distances and addition/scalar multiplication of vectors can be calculated in $O(d)$ time, there exists an algorithm that solves the weighted problem in $O_{\alpha}(nd)$ time with fraction $\alpha$ and approximation constant $C = \frac{4\alpha}{2\alpha-1}.$
\end{theorem}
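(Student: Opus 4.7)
The plan is to follow the outline in the proof-ideas section: reduce an $n$-point instance to an instance on at most $n/2$ points (at the cost of enlarging the guaranteed radius from $r$ to $3r$), recurse, and then restore the approximation constant using a centroid-refinement step on the original $n$ points.

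First, for the pairing reduction, I would partition the $n$ input points into $\lfloor n/2 \rfloor$ arbitrary pairs. For each pair $(a_i, a_j)$, if $\|a_i - a_j\| > 2r$ then at most one of the two can lie in the promised ball $B$, so I delete both; if $\|a_i - a_j\| \le 2r$, I merge the pair into a single representative at position $a_i$ with combined weight $w_i + w_j$. Let $q$ denote the center of $B$. The resulting instance has at most $n/2$ points, and by the $\alpha > 1/2$ hypothesis, at least an $\alpha$-fraction of the surviving weight lies in a ball of radius $3r$ around $q$ (a merged point inherits the ``in $B$'' contribution of its partner, which moves it by at most $2r$ from $q$). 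This step takes $O(nd)$ time and is, in spirit, an analog of the Boyer--Moore / Misra--Gries heavy-hitter deletion idea. Recursing on the reduced instance with its effective radius of $3r$ then returns a center $p_0$ such that the radius-$3Cr$ ball around $p_0$ carries at least an $\alpha$-fraction of the surviving weight; unwinding the merging shows at least $\alpha w$ of the \emph{original} weight lies within $3Cr + 2r$ of $p_0$, and intersecting this ball with $B$ (each contains more than half the weight) gives $\|p_0 - q\| \le 3Cr + 3r$, so $B$ is contained in the ball of radius $R_0 := \|p_0 - q\| + r$ around $p_0$.

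The key step is the centroid refinement. Inductively, let $p_{k+1}$ be the weighted centroid of the \emph{original} $n$ points lying inside the ball of radius $R_k := \|p_k - q\| + r$ around $p_k$. Since $B$ is contained in this ball, the in-$B$ original weight is $\alpha w$ while the total in-ball weight is at most $w$, so the in-$B$ fraction is at least $\alpha$. Expressing the centroid as a convex combination of the centroid of the in-$B$ points (within $r$ of $q$) and of the remainder (within $2R_k - r$ of $q$, since $\|p_k - q\| \le R_k - r$) gives
\[
\|p_{k+1} - q\| \;\le\; (2\alpha - 1)\,r \;+\; 2(1-\alpha)\,R_k,
\]
from which $R_{k+1} \le 2\alpha\, r + 2(1-\alpha)\, R_k$. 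Because $2(1-\alpha) < 1$ when $\alpha > 1/2$, this recurrence contracts toward the fixed point $R^\ast = 2\alpha r/(2\alpha-1) = \tfrac12 Cr$, and after $O_\alpha(1)$ iterations we have $R_k \le Cr = \tfrac{4\alpha}{2\alpha-1}\,r$. The ball of radius $Cr$ around the resulting $p_k$ then contains $B$ and so carries at least $\alpha w$ of the original weight.

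Each centroid pass costs $O(nd)$ and we perform only $O_\alpha(1)$ of them per recursion level, so $T(n) = T(n/2) + O_\alpha(nd) = O_\alpha(nd)$. The main obstacle I anticipate is the weighted bookkeeping in the pairing step: when the two weights in a pair are very unbalanced, the naive ``delete both far points'' move can drop the surviving $B$-fraction below $\alpha$, and so one needs a more careful fractional or weight-balanced analog (this is where the Boyer--Moore / Misra--Gries connection is genuinely useful). Once that is pinned down, the rest of the proof is the clean centroid contraction above.
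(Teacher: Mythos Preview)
Your outline matches the paper's approach almost exactly: pair-and-collapse to halve $n$ at the cost of tripling the effective radius, recurse, then run $O_\alpha(1)$ centroid passes on the original points to shrink the enclosing radius back down to $Cr$. Your centroid contraction $R_{k+1}\le 2\alpha r + 2(1-\alpha)R_k$ is essentially the paper's computation rewritten, with the same fixed point $\frac{2\alpha}{2\alpha-1}\,r=\tfrac12 Cr$.

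The gap you flag is real and is exactly the place the paper does something nontrivial. The fix is the Misra--Gries move you allude to: for a pair $(a_{2i-1},a_{2i})$, always keep as representative $q_i$ the \emph{heavier} of the two, and assign it weight
\[
v_i \;=\;
\begin{cases}
w_{2i-1}+w_{2i} & \text{if } \|a_{2i-1}-a_{2i}\|\le 2r,\\[2pt]
|w_{2i-1}-w_{2i}| & \text{if } \|a_{2i-1}-a_{2i}\|> 2r.
\end{cases}
\]
Then one checks, pair by pair, that the $w$-weight of $\{a_{2i-1},a_{2i}\}\cap B$ is at most $\tfrac12(w_{2i-1}+w_{2i}+v_i)$ when $\|q_i-q\|\le 3r$ and at most $\tfrac12(w_{2i-1}+w_{2i}-v_i)$ when $\|q_i-q\|>3r$ (the ``heavier point is the center'' choice is what makes the first case go through). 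Summing gives $\sum_{q_i\in B^3(q)}v_i\ge \alpha\sum_i v_i$, which is precisely the $\alpha$-fraction guarantee at radius $3r$ for the reduced instance, now with no loss from unbalanced pairs.

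One minor point: your ``unwinding the merging shows at least $\alpha w$ of the original weight lies within $3Cr+2r$ of $p_0$'' is not quite right and not needed. The surviving $v$-weight can be much smaller than $w$, so you cannot read off $\alpha w$ original weight from $\alpha v$ surviving weight. What you actually use (and what the paper uses) is only the intersection argument on the \emph{surviving} points: both the ball of radius $3Cr$ around $p_0$ and the ball of radius $3r$ around $q$ carry more than half the $v$-weight, hence share a $q_i$, hence $\|p_0-q\|\le 3Cr+3r$ and $B\subset B^{3C+4}(p_0)$. From there your centroid contraction takes over.
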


\begin{proof}
    If $n = 1$ we just return the first point so assume $n \ge 2.$  Given $n$ points, split the points into $n/2$ groups of $2$.  Assume $n$ is even, since if $n$ is odd, we can add a final point with $0$ weight.  Letting $m = \frac{n}{2},$ we construct balls $B_1, ..., B_m$, each of radius $2r$ as follows.  The ball $B_i$ will be centered around the point $a_{2i-1}$ or $a_{2i}$ with higher weight (we break ties with $a_{2i-1}$), so if $w_{2i-1} \ge w_{2i}$ we center around $a_{2i-1}$ and if $w_{2i-1} < w_{2i}$ we center around $a_{2i}$.
    
    Let $q_i$ be the center of $B_i,$ i.e. $q_i$ is either $a_{2i-1}$ or $a_{2i}$.  Let $B$ be a ball of radius $r$ containing points of total weight at least $\alpha w$, and let $q$ be the center of $B.$  
    
    We construct the new set of weights $v_i$ for the points $q_i$.  We let $v_i$ be the total $w$-weight of the subset of $\{a_{2i-1}, a_{2i}\}$ which is contained in $B_i$ minus the total $w$-weight of the subset which is not contained in $B_i.$  In other words, if $||a_{2i-1}-a_{2i}|| \le 2r,$ then $v_i = w_{2i-1}+w_{2i}$ and otherwise, $v_i = \max (w_{2i-1}, w_{2i}) - \min (w_{2i-1}, w_{2i}).$  Note that the total weight of $\{a_{2i-1}, a_{2i}\} \cap B_i$ is $\frac{w_{2i-1}+w_{2i}+v_i}{2}.$  Clearly, for all $i,$ $0 \le v_i \le w_{2i-1}+w_{2i}$.
    
    Next, if $||q_i-q|| > 3r,$ then $B_i$ and $B$ do not intersect.  This means that the total $w$-weight of $\{a_{2i-1}, a_{2i}\} \cap B$ is at most $\frac{w_{2i-1}+w_{2i}-v_i}{2}.$  If $||q_i-q|| \le 3r$, the total $w$-weight of the intersection $\{a_{2i-1}, a_{2i}\} \cap B$ is at most $\frac{w_{2i-1}+w_{2i}+v_i}{2},$ since if both $a_{2i-1}, a_{2i} \in B,$ then both are in $B_i,$ and if exactly one of $a_{2i-1}, a_{2i}$ is in $B$, then the one with larger weight is in $B_i$ because it is the center, $q_i$.  
    
    Now, define $S \subset [m]$ to be the set of $i$ such that $||q_i-q|| \le 3r,$ i.e. $S = \{i: 1 \le i \le m, ||q_i-q|| \le 3r\}.$  Then, by looking at the total $w$-weight of the subset of $a_{[n]}$ in $B$, 
    \[\sum\limits_{i \in S} \frac{w_{2i-1}+w_{2i}+v_i}{2} + \sum\limits_{i \not\in S} \frac{w_{2i-1}+w_{2i}-v_i}{2} \ge \sum\limits_{a_i \in B} w_i \ge \alpha w.\]
    Since $w$ is nonzero and $\alpha > \frac{1}{2}$, at least one $v_i$ is nonzero.  The left hand side equals 
    \[\frac{w}{2} + \frac{1}{2} \sum\limits_{i \in S} v_i - \frac{1}{2} \sum\limits_{i \not\in S} v_i,\]
    which means 
    \[\sum\limits_{i \in S}v_i - \sum\limits_{i \not\in S} v_i \ge (2\alpha-1)w \ge (2\alpha-1)\sum\limits_{1 \le i \le m} v_i \Rightarrow \sum\limits_{i \in S} v_i \ge \alpha \sum\limits_{1 \le i \le m} v_i.\]
    
    Therefore, the ball of radius $3r$ around $q$ contains at least $\alpha$ of the total $v$-weight of the points $q_i$.  Since at least one of the $v_i$'s is nonzero and all are nonnegative, we can find a ball of radius $3Cr$ around some point $p$ containing at least $\alpha$ of the total $v$-weight by performing the same algorithm on a size $m$ set $q_1, ..., q_m.$  Therefore, the ball of radius $3r$ around $q$ intersects the ball of radius $3Cr$ around $p,$ as some $q_i$ must be in both balls, so the ball of radius $(3C+4)r$ around $p$ must contain $B$.  Given this, if we can get some ball of radius $Cr$ that contains $B$, we are done.  
    
    We do this via looking at centroids, where the weighted centroid of points $x_1, ..., x_m$ with weights $w_1, ..., w_m$ equals $\frac{w_1x_1+...+w_mx_m}{w_1+...+w_m}$.  Let $\epsilon = \alpha - \frac{1}{2}$ and choose some $K \ge 2 + \frac{1}{\epsilon}$.  Suppose we have found some point $a$ such that the ball of radius $Kr$ around $a$, denoted $B^K(a),$ contains $B$.  We look at the $w$-weighted centroid of all points $a_i \in B^K(a)$, which clearly takes $O(nd)$ time to calculate.  If we let $a_{S_1} = a_{[n]} \cap B,$ then $w_{S_1} \ge \alpha w$ so the sum of the $w$-weights of points in $B^K(a)\backslash B$ is at most $w(1-\alpha).$  Then, the distance between the weighted centroid of all $a_i \in B^K(a)$ and $q$ is at most 
    \[\frac{1}{w_{S_1} + \sum_{a_i \in B^K(a) \backslash B} w_i} \left(\sum\limits_{a_i \in B} ||q-a_i|| w_i + \sum_{a_i \in B^K(a) \backslash B} ||q-a_i|| w_i\right)\]
    \[\le \frac{1}{w_{S_1} + \sum_{a_i \in B^K(a) \backslash B} w_i} \left(r w_{S_1}  + (2K-1)r \sum_{a_i \in B^K(a) \backslash B} w_i \right)\]
    since $||q-a|| \le (K-1)r$ and $||a-a_i|| \le Kr$ for any $a_i \in B^K(a) \backslash B.$  But since $w_{S_1} \ge \alpha w$ and $\sum_{a_i \in B^K(a) \backslash B} w_i \le (1-\alpha)w,$ this is at most 
    \[\alpha r + (2K-1)(1-\alpha)r = (2K-1-2K\alpha+2\alpha)r = (2K - 1 - K - 2K\epsilon+1+2\epsilon)r = (K-2(K-1)\epsilon)r.\]
    However, since $K \ge 2 + \frac{1}{\epsilon},$ $2(K-1)\epsilon \ge K\epsilon+1,$ so this is at most $(K-K\epsilon-1)r.$  Therefore, the weighted centroid of all these points is at most $(K-K\epsilon-1)r,$ so the ball of radius $K(1-\epsilon)r$ around the weighted centroid contains $B$.  This gives us a slightly better range.  We can repeat this process starting with $K = 3C+4$ until we get $K \le C,$ assuming that $C = 2+\frac{1}{\epsilon} = \frac{4\alpha}{2\alpha-1}.$  As $3C+4 \le 5C,$ this process needs to be repeated at most $(\log 5)/(\log \frac{1}{1-\epsilon}) = O(\epsilon^{-1})$ times.
    
    With the exception of the recursion on $q_1, ..., q_m$ with weights $v_1, ..., v_m,$ everything else takes $O(nd)$ time, but we have to repeat the centroid algorithm multiple times, where the number of repetitions depends on $\alpha$.  Therefore, the total running time is $T(n) = O_\alpha(nd) + T(n/2),$ which means $T(n) = O_\alpha(nd),$ as desired.
\end{proof}

\section{Normed Vector Space Algorithms: $\alpha > 0$} \label{NormedVectorSpaces2}

While we were unable to solve the normed vector space 1-center clustering with outliers problem for all $\alpha > 0$ in $O_{\alpha}(nd)$ time, we were able to find a solution running in $O_{\alpha, k}(nd \log^{(k)} n) = O_{\alpha, k}(nd \log...\log(n))$ time.  We first show an $nd \text{ polylog } n$ time solution and explain how this can be used to solve the problem in $O_{\alpha, k}(nd \log^{(k)} n)$ time.

The following result is useful for both the normed vector space and arbitrary metric space versions, primarily for $0 < \alpha \le \frac{1}{2}$.  It is important for making sure that if we found a ball of radius $Cr$ containing $\alpha w$ weight or $\alpha n$ points, even if there are multiple disjoint balls with this property, we can find a few balls of radius $Cr$, of which any ball of radius $r$ containing at least $\alpha w$ weight or $\alpha n$ points is near one of the radius $Cr$ balls.

\begin{lemma} \label{lame}
    Suppose we are in some space where computing distances between two points can be done in $O(d)$ time.  Suppose that for some fixed $\alpha, C$ and for any $\beta \ge \alpha$, we can solve the weighted problem with fraction $\beta$ and approximation constant $C$ in time $O(nd f(n))$ (with the runtime constant independent of $\beta$).  Then, for any $\beta \ge \alpha,$ we can find at most $\beta^{-1}$ points $p_1, ..., p_{\ell}$ in $O(ndf(n) \lfloor \beta^{-1} \rfloor)$ time such that the ball of radius $Cr$ around each $p_i$ contains at least $\beta w$ weight and any ball of radius $r$ containing at least $\beta w$ total weight intersects at least one of the balls of radius $Cr.$ 
\end{lemma}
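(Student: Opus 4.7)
The plan is to apply the hypothesized fraction-$\beta$ algorithm iteratively, zeroing out the weights of points inside each ball we find so that subsequent calls are forced to return new balls disjoint from the ones already chosen. The loop will continue until either the remaining weight drops below $\beta w$ or the sub-algorithm fails to return a valid ball; because each successful iteration removes at least $\beta w$ of weight, this can happen at most $\lfloor \beta^{-1} \rfloor$ times.

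Concretely, I would maintain an array of current weights $w'_i$, initialized to $w_i$. Repeat the following. Let $w^* = \sum_i w'_i$; if $w^* < \beta w$, halt. Otherwise set $\beta' := \beta w / w^*$ (so $\beta' \ge \beta \ge \alpha$), invoke the assumed fraction-$\beta'$ algorithm on the points with weights $w'$, and let $p$ be its output. In $O(nd)$ time, verify whether $B(p, Cr)$ contains at least $\beta w$ of the current weight; if not, halt; if so, append $p$ to the output list and set $w'_i \gets 0$ for every $a_i \in B(p, Cr)$. Return the resulting list.

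For the count and runtime: each appended $p_j$ zeroes out at least $\beta w$ of the total weight, and no weight is ever restored, so after $\ell$ iterations $\ell \beta w \le w$, giving $\ell \le \lfloor \beta^{-1} \rfloor$. Each iteration costs $O(ndf(n))$ for the sub-algorithm plus $O(nd)$ for the verification and zeroing, so the total running time is $O(ndf(n)\lfloor\beta^{-1}\rfloor)$. For correctness, suppose a ball $B$ of radius $r$ with original weight at least $\beta w$ were disjoint from every $B(p_j, Cr)$. Then no point of $B$ is ever zeroed, so $B$ retains its $\ge \beta w$ weight throughout, forcing $w^* \ge \beta w$ at every iteration and providing a radius-$r$ witness for fraction $\beta'$ on the current weights; by the sub-algorithm's hypothesis it then must return a valid $p$ with $B(p, Cr)$ containing at least $\beta' w^* = \beta w$ current weight, so the verification succeeds and the loop continues forever, contradicting $\ell \le \lfloor \beta^{-1} \rfloor$.

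The main subtlety is the choice of fraction parameter supplied to the sub-algorithm. Passing the fixed $\beta$ would only yield a $Cr$-ball with $\beta w^*$ current weight, which weakens below $\beta w$ once weights have been removed; this would break both the counting argument (each iteration is no longer guaranteed to remove $\beta w$ weight, so $\ell$ is not bounded) and the correctness argument (a surviving $B$ need not witness anything). Rescaling to $\beta' = \beta w / w^*$ keeps the sub-algorithm's input valid since $\beta' \ge \alpha$, and anchors the per-iteration weight removed at the original $\beta w$, after which the analysis is routine.
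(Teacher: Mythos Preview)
Your proof is correct and follows essentially the same approach as the paper: iteratively call the assumed algorithm with the rescaled fraction $\beta w / w^{*}$, zero out the weights inside the returned $Cr$-ball, and repeat, bounding the number of rounds by $\lfloor \beta^{-1}\rfloor$ since each round removes at least $\beta w$ weight. The paper presents this as an induction on $\lfloor \beta^{-1}\rfloor$ (passing fraction $\beta/(1-\beta')$ to the inductive call, which is exactly your $\beta'$), whereas you unroll it into an explicit loop; the content is the same, and your explicit remark that the rescaling is what makes the counting and correctness arguments go through matches the paper's use of the growing fraction in the recursion.
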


The proof of lemma \ref{lame} is not too difficult and is left in Appendix \ref{A}.

\begin{lemma} \label{EuclideanFirstLemmaBelowHalf}
    For any $0 < \alpha < 1,$ let $C = 2+\frac{2}{\alpha}$ and assume we are dealing with the weighted problem in a normed vector space (with $w > 0$), where distances and vector addition/scalar multiplication are calculable in $O(d)$ time.  Suppose there exists a ball $B$ of radius $r$ such that $B$ and the ball $B^{2C+3}$ concentric with $B$ but of radius $(2C+3)r$ satisfies
    \[\sum\limits_{a_i \in B} w_i \ge \left(\sum\limits_{a_i \in B^{2C+3} \backslash B} w_i\right) + \alpha w.\]
    Then, we will be able to find a set of at most $\frac{1}{\alpha}$ points $z_1, ..., z_{\ell}$ in $O_{\alpha}(n d (\log n)^{\lfloor \alpha^{-1} \rfloor})$ time such that the ball $B^C(z_i)$ of radius $Cr$ around each $z_i$ contains at least $\alpha w$ total weight, and at least one of the balls $B^C(z_i)$ intersects the ball $B$.
    
    Also, if there does not exist such a ball $B$, the algorithm will still succeed and satisfy the conditions (where the condition of $B$ intersecting at least one of $B^C(z_i)$ is true by default).
\end{lemma}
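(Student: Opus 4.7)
The plan is to induct on $n$, combining recursive halving with local refinement via Theorem \ref{EuclideanAboveHalf} and filtering via Lemma \ref{lame}. First I would split the points into two halves $P_1, P_2$ of approximately equal total $w$-weight using a weighted-median split in $O(nd)$ time. Because the hypothesis $\sum_{a_i \in B} w_i \ge \sum_{a_i \in B^{2C+3}\setminus B} w_i + \alpha w$ distributes additively across the partition, at least one half $P_i$ must satisfy the analogous inequality locally, with $w$ replaced by $w_{P_i}$. Recursively applying the lemma on each half yields at most $2\lfloor 1/\alpha \rfloor$ candidate centers, and one of them---arising from the good half---has $B^C(c) \cap B \neq \emptyset$, equivalently $\|c - q\| \le (C+1)r$, where $q$ denotes the center of $B$.

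Second, for each candidate $c$ I would localize by forming $S_c = \{a_i : \|a_i - c\| \le (C+2)r\}$. When $c$ is good, $B \subseteq S_c \subseteq B^{2C+3}$ by the triangle inequality. Combining the hypothesis with the trivial $w(B) + w(B^{2C+3}\setminus B) \le w$ yields $w(B) \le \tfrac{1+\alpha}{2}w$, and so $w(S_c) = w(B) + w(S_c \setminus B) \le 2w(B) - \alpha w$, which forces the local fraction bound $w(B)/w(S_c) \ge \tfrac{1+\alpha}{2}$. Now Theorem \ref{EuclideanAboveHalf} applied to $S_c$ with fraction $\tfrac{1+\alpha}{2} > \tfrac{1}{2}$ produces a point $z$ whose $B^C(z)$ intersects $B$, because the approximation constant evaluates to $4 \cdot \tfrac{(1+\alpha)/2}{2 \cdot (1+\alpha)/2 - 1} = 2 + 2/\alpha$, exactly matching the lemma's $C$.

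Finally I would aggregate all the refined candidates and apply Lemma \ref{lame} with fraction $\alpha$ to filter down to at most $\lfloor 1/\alpha \rfloor$ centers, each of whose $B^C$-ball has $\alpha w$ weight. The recursion gives $T(n) = 2T(n/2) + O_\alpha(nd)$, contributing a $\log n$ factor per level; the iterative ``remove the found ball and retry with increased $\alpha$'' scheme previewed in the proof sketch, applied over the up-to-$\lfloor 1/\alpha \rfloor$ potential disjoint good clusters, compounds to yield $O_\alpha(nd (\log n)^{\lfloor 1/\alpha \rfloor})$. The main obstacle is reconciling the local Theorem \ref{EuclideanAboveHalf} guarantee---a $\tfrac{1+\alpha}{2}$ fraction of $w(S_c)$---with the global $\alpha w$ threshold required on output balls: when $w(S_c)$ sits close to its minimum $\alpha w$, the local theorem produces only $\tfrac{\alpha(1+\alpha)}{2} w$ weight, so the iterative Lemma \ref{lame} filtering and remove-and-retry scheme must absorb this gap, and making that compensation precise while tracking the $(\log n)^{\lfloor 1/\alpha \rfloor}$ blowup is the most delicate part of the argument.
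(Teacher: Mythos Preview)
Your outline follows the paper's strategy---halve, recurse, localize via $B^{C+2}$, apply Theorem~\ref{EuclideanAboveHalf}, then remove-and-retry---but two concrete steps do not close as written.

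First, the claim $w(B)\le\frac{1+\alpha}{2}w$ is false: the hypothesis together with $w(B)+w(B^{2C+3}\setminus B)\le w$ only yields $w(B^{2C+3}\setminus B)\le\frac{1-\alpha}{2}w$; it places no nontrivial upper bound on $w(B)$. Your derivation of $w(B)/w(S_c)\ge\frac{1+\alpha}{2}$ from $w(S_c)\le 2w(B)-\alpha w$ therefore fails (that implication is equivalent to the false bound). The paper instead sets $\beta=w(S_c)/w$, notes $w(B)\ge\frac{\beta+\alpha}{2}w$, and gets the local fraction $\ge\frac{\beta+\alpha}{2\beta}\ge\frac{1+\alpha}{2}$ using only $\beta\le 1$. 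Crucially, this also dissolves the ``obstacle'' you flag at the end: applying Theorem~\ref{EuclideanAboveHalf} with the \emph{computed} fraction $\frac{\beta+\alpha}{2\beta}$ (not the worst-case $\frac{1+\alpha}{2}$) returns a ball of weight at least $\frac{\beta+\alpha}{2\beta}\cdot\beta w=\frac{\beta+\alpha}{2}w\ge\alpha w$, so the global $\alpha w$ threshold is met directly and no compensation is needed.

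Second, the runtime accounting is not set up. Splitting by equal weight rather than equal cardinality can make the recursion depth unbounded (a single heavy point forces a $1$-versus-$(n{-}1)$ split); split by index instead. More substantively, invoking Lemma~\ref{lame} ``to filter'' is circular here: that lemma presupposes a solver for all $\beta\ge\alpha$, which is precisely what you are constructing. The paper avoids this via a double induction, on $\lfloor\alpha^{-1}\rfloor$ with the $n$-halving inside. After the two recursive calls and one local refinement produce a ball $B^C(z_1)$ of weight $\ge\alpha w$, remove its points; if $B^C(z_1)$ missed $B$, the hypothesis persists with fraction $\alpha/(1-\gamma)$ whose floor-inverse has dropped by at least one, so the inductive hypothesis on $\lfloor\alpha^{-1}\rfloor$ supplies the remaining $z_i$'s in $O_\alpha\bigl(nd(\log n)^{\lfloor\alpha^{-1}\rfloor-1}\bigr)$ time. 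This gives $T(n)=2T(n/2)+O_\alpha\bigl(nd(\log n)^{\lfloor\alpha^{-1}\rfloor-1}\bigr)=O_\alpha\bigl(nd(\log n)^{\lfloor\alpha^{-1}\rfloor}\bigr)$, which is the structure your sketch gestures at but does not actually carry out.
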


\begin{proof}
    Our proof inducts on $\lfloor \alpha^{-1} \rfloor.$  We show an $O(nd \log n)$-time algorithm for $\alpha > \frac{1}{2}$ and given an $O(nd (\log n)^{k-1})$-time algorithm for all $\alpha' > \frac{1}{k},$ we show an $O(nd (\log n)^{k})$-time algorithm for all $\alpha > \frac{1}{k+1}$.  This means that the big $O$ time constant may depend on $\lfloor \alpha^{-1} \rfloor.$

    Assume $n$ is a power of $2$, as we can add extra points of weight $0$.  Next, split up the points $a_1, ..., a_n$ into two groups $a_{[n/2]}$ and $a_{[n/2+1::n]}.$  Note that $B$ clearly still holds the same property for either the first half or second half of points, i.e. either
    \[\sum\limits_{\substack{a_i \in B \\ 1 \le i \le n/2}} w_i \ge \alpha w_{[n/2]} + \sum\limits_{\substack{a_i \in B^{2C+3} \backslash B \\ 1 \le i \le n/2}} w_i \text{\space \space or } \sum\limits_{\substack{a_i \in B \\ n/2+1 \le i \le n}} w_i \ge \alpha w_{[n/2+1::n]} + \sum\limits_{\substack{a_i \in B^{2C+3} \backslash B \\ n/2+1 \le i \le n}} w_i.\]

    The algorithm first recursively runs on the two halves $a_{[n/2]}$ and $a_{[n/2+1::n]}$ to get points $x_1, ..., x_r$ and $y_1, ..., y_s$ such that $r, s \le \frac{1}{\alpha}$ and there exists some point $z \in \{x_1, ..., x_r, y_1, ..., y_s\}$ such that the ball of radius $Cr$ around $z$ intersects $B$.  Therefore, $B^{C+2}(z),$ the ball of radius $(C+2)r$ around $z$, contains $B$ but is contained in $B^{2C+3}$.
    
    Suppose we could successfully guess such a point $z$.  Then, the weight of points in $a_{[n]} \cap B^{C+2}(z)$ is $\beta w$ for some $\beta \ge \alpha,$ and so the weight of points in $a_{[n]} \cap B$ is at least $\frac{\beta+\alpha}{2} w$ since $B^{C+2}(z) \subset B^{2C+3}$.  We can easily determine the set of points in $a_{[n]} \cap B^{C+2}(z)$ in $O(nd)$ time, and thus compute $\beta.$  Now, among the points in $a_{[n]} \cap B^{C+2}(z),$ at least $\frac{\beta+\alpha}{2\beta} \ge \frac{1+\alpha}{2}$ of the weight is contained in some ball of radius $r$, which means by Theorem \ref{EuclideanAboveHalf}, we can in $O_{\alpha}(nd)$ time find a ball of radius
    \[\frac{4\left(\frac{\beta+\alpha}{2\beta}\right)}{2\left(\frac{\beta+\alpha}{2\beta}\right)-1} \cdot r = \frac{2(\beta+\alpha)}{\beta+\alpha - \beta} \cdot r = \left(2 + \frac{2\beta}{\alpha}\right)r \le Cr\]
    containing at least $\frac{\beta+\alpha}{2\beta} \cdot \beta w \ge \alpha w$ weight.
    
    If $\alpha > \frac{1}{2},$ this means we have found a ball of radius $Cr$ with at least $\alpha w$ total weight.  It must also intersect $B$, because otherwise the total weight of all the points would be at least $2\alpha w > w$.  Therefore, we can recursively run the algorithm on the two halves, and then in $O(nd)$ time guess at most $2$ possibilities for $z$ to find a ball of radius $Cr$.  Therefore, this algorithm takes $T(n) = 2T(n/2)+O(nd) \Rightarrow T(n) = O(nd \log n)$ time.
    
    Suppose $\frac{1}{k+1} < \alpha \le \frac{1}{k}.$  Then, in $O_{\alpha}(nd)$ time, we can try each $z \in \{x_1, ..., y_s\}$ to get some ball of radius $Cr$ centered around $z_1 = z$ that contains at least $\alpha w$ weight.  If we find no such ball, then no such $B$ exists, so we return nothing.  Else, we find some ball around $z_1$.  In case the ball does not intersect $B$, we compute the total weight of points in $B^C(z_1),$ the ball of radius $Cr$ around $z_1.$  Define $\gamma$ so that the weight of points in $B^C(z_1)$ equals $\gamma w$, so clearly $\gamma \ge \alpha.$  Therefore, if $B^C(z_1)$ does not intersect $B,$ then if we remove these points, we have a subset $\{a_1', ..., a_m'\}$ of the original points with total weight $w' = (1-\gamma) w,$ which means that for the new set of points, $B$ satisfies
    \[\sum\limits_{a_i' \in B} w_i' = \sum\limits_{a_i \in B} w_i \ge \left(\sum\limits_{a_i \in B^{2C+3}\backslash B} w_i\right) + \alpha w \ge \left(\sum\limits_{a_i' \in B^{2C+3}\backslash B} w_i'\right) + \frac{\alpha}{1-\gamma} w'.\]
    
    Thus, by our induction hypothesis, in $O_{\alpha/(1-\gamma)}(nd (\log n)^{\lfloor (1-\gamma)/\alpha \rfloor}) = O_{\alpha}(nd (\log n)^{\lfloor \alpha^{-1} \rfloor - 1})$ time, we can find a set of at most $\frac{1-\gamma}{\alpha} \le \frac{1}{\alpha} - 1$ points $z_2, ..., z_{\ell}$ such that the balls of radius $Cr$ around each $z_i$ contains at least $\frac{\alpha}{1-\gamma} w' = \alpha w$ weight in the new set of points (and thus in the old set of points), and at least one of the balls of radius $Cr$ around some $z_i$ (possibly $z_1$) intersects $B$.
    
    Since we first recursively perform the algorithm on the two halves, the total runtime is $T(n) = 2 \cdot T(n/2) + O_{\alpha}(nd (\log n)^{\lfloor \alpha^{-1} \rfloor - 1})$ by our inductive hypothesis, so $T(n) = O_{\alpha}(nd(\log n)^{\lfloor \alpha^{-1} \rfloor})$.
\end{proof}

We use the previous result to find an $O(nd \text{ polylog } n)$ time solution.

\begin{lemma} \label{ndpolylogn}
    For any $0 < \alpha < 1,$ one can solve the weighted Euclidean problem with fraction $\alpha$ and some approximation constant $C = O_\alpha(1)$ in $O_{\alpha}(nd (\log n)^{\lfloor 2\alpha^{-1} \rfloor})$ time.
\end{lemma}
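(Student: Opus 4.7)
The plan is to reduce Lemma~\ref{ndpolylogn} to Lemma~\ref{EuclideanFirstLemmaBelowHalf}. The inner lemma assumes the unknown target ball $B$ of radius $r$ satisfies a separation property against a slightly larger concentric ball $B^{2C+3}$; we are not given such a $B$, but I will argue that some concentric expansion of $B$ must satisfy a version of this property, and that we can afford to search over all the relevant scales.

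Fix $C = 2 + 4/\alpha$, the approximation constant produced by Lemma~\ref{EuclideanFirstLemmaBelowHalf} when invoked at fraction $\alpha/2$, and set $\gamma = 2C+3$. For each integer $b \ge 0$, let $B_b$ denote the ball concentric with the unknown target $B$ of radius $\gamma^b r$. I claim there exists some $b^\ast = O(\log \alpha^{-1})$ with
\[
\sum_{a_i \in B_{b^\ast}} w_i \;\ge\; \Bigl(\sum_{a_i \in B_{b^\ast}^{\gamma}\setminus B_{b^\ast}} w_i\Bigr) + \frac{\alpha}{2}\,w.
\]
If the inequality failed for every $b < b^\ast$, then the quantity $g(b) := \mathrm{weight}(B_b) - (\alpha/2) w$ would satisfy $g(b+1) > 2 g(b)$ starting from $g(0) \ge (\alpha/2) w > 0$, so the total weight would exceed $w$ after $O(\log \alpha^{-1})$ steps, a contradiction. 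This is why the reduction uses fraction $\alpha/2$: at fraction $\alpha$ one would only get $g(0) \ge 0$ and the doubling argument would be vacuous.

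Since $b^\ast$ is unknown, the algorithm iterates $b$ from $0$ up to $B^\ast := O(\log \alpha^{-1})$. For each such $b$, it invokes Lemma~\ref{EuclideanFirstLemmaBelowHalf} with radius $r_b := \gamma^b r$ and fraction $\alpha/2$, obtaining a set of at most $2/\alpha$ candidate centers whose radius-$C r_b$ balls each contain at least $(\alpha/2)w$ weight. For $b = b^\ast$ the lemma additionally guarantees that one such ball $B^C(z)$ intersects $B_{b^\ast}$, so by the triangle inequality the ball of radius $(C+2) r_{b^\ast}$ around $z$ contains $B_{b^\ast}$, hence contains $B$, and therefore holds at least $\alpha w$ total weight. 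The algorithm finishes by verifying each of the $O_\alpha(1)$ candidates collected over all $b$: for each candidate $z$ produced at scale $b$, it computes the $w$-weight of the points inside the radius-$(C+2) r_b$ ball around $z$ in $O(nd)$ time, and returns any candidate for which this weight is at least $\alpha w$. At least one candidate (the one coming from scale $b^\ast$) is guaranteed to pass.

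The returned ball has radius at most $(C+2)\gamma^{B^\ast} r = O_\alpha(r)$, so the approximation constant is $O_\alpha(1)$. The runtime is dominated by the $O(\log \alpha^{-1})$ calls to Lemma~\ref{EuclideanFirstLemmaBelowHalf} at fraction $\alpha/2$, each costing $O_\alpha(nd(\log n)^{\lfloor 2\alpha^{-1}\rfloor})$, which matches the stated bound; verification contributes only $O_\alpha(nd)$. The main obstacle I anticipate is exactly the calibration between the separation threshold and the fraction fed to the inner lemma: the $\alpha/2$ slack is what makes the doubling argument nontrivial, and it is precisely this halving that produces the $\lfloor 2\alpha^{-1}\rfloor$ exponent in the log-factor of the theorem.
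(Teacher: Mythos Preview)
Your proposal is correct and essentially matches the paper's proof. You reduce to Lemma~\ref{EuclideanFirstLemmaBelowHalf} at fraction $\alpha/2$, giving the same constant $C = 2 + 4/\alpha$ and the same scale factor $\gamma = 2C+3 = 8/\alpha + 7$; you establish the existence of a good scale $b^\ast = O(\log \alpha^{-1})$ via a doubling argument on $g(b) = \mathrm{weight}(B_b) - (\alpha/2)w$, whereas the paper phrases it as taking the maximal $s$ with $\mathrm{weight}(B_s) \ge (3/2)^s \alpha w$, but these are the same idea. The enumeration over scales and the $O(nd)$ verification of each candidate are identical.
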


\begin{proof}
    Suppose $B$ is a ball of radius $r$ around $p$ with $\alpha w$ points and let $S \subset \BN \cup \{0\}$ be the set of nonnegative integers $s$ such that there is a ball of radius $\left(\frac{8}{\alpha}+7\right)^s \cdot r$ around $p$ containing at least $(\frac{3}{2})^s \cdot \alpha w$ total weight.  Because of $B$, $0 \in S$.  Since $\alpha > 0,$ there clearly exists a maximal $s \in S$ which is at most $\frac{\log (\alpha^{-1})}{\log (3/2)}.$  For this maximal $s,$ there is a ball $B'$ of radius $R = \left(\frac{8}{\alpha}+7\right)^s \cdot r$ around $p$ containing at least $\alpha' w$ weight, where $\alpha' = (\frac{3}{2})^s \alpha,$ but the ball of radius $\left(\frac{8}{\alpha} + 7\right) R$ around $p$ contains at most $\frac{3}{2} \alpha' w$ total weight.  Therefore, if $\beta = \frac{\alpha}{2},$ if we let $C = 2 + \frac{2}{\beta},$ the ball $(B')^{2C+3}$ of radius $(2C+3)R = \left(\frac{8}{\alpha} + 7\right) R$ around $p$ satisfies
    \[\sum\limits_{a_i \in B'} w_i \ge \left(\sum\limits_{a_i \in (B')^{2C+3} \backslash B'} w_i\right) + \beta w.\]
    Therefore, if we knew $s,$ plugging $\beta$ into the algorithm of Lemma \ref{EuclideanFirstLemmaBelowHalf} gives us, in $O_{\alpha}(nd(\log n^{\lfloor 2\alpha^{-1} \rfloor}))$ time, at most $2 \alpha^{-1}$ points such that the ball of radius $\left(\frac{4}{\alpha}+2\right) \cdot \left(\frac{8}{\alpha}+7\right)^s$ around at least one of them intersects $B',$ and thus the ball of radius $\left(\frac{4}{\alpha}+4\right) \cdot \left(\frac{8}{\alpha}+7\right)^s$ around that point has at least $\alpha w$ weight.  We can try it for all $s$ between $0$ and $\frac{\log (\alpha^{-1})}{\log (3/2)}$ and verify each point (verification takes $O_{\alpha}(nd)$ time) to get at least one ball containing $\alpha w$ or more weight, which gives the desired result.
\end{proof}

\vskip -0.15cm

We now can go to $O_{\alpha, k}(nd \log^{(k)}(n))$ time using the following lemma.

\begin{lemma} \label{BruteForceDivConquer}
    Fix some $\alpha, C$ and suppose we are in some space (Euclidean, general metric, or something else) where distances can be computed in $O(d)$ time.  Suppose that for any fraction $\beta \ge \alpha$ and approximation constant $C$ there exists an algorithm that solves the weighted problem in time $O(ndf(n))$.  Then, for any nondecreasing function $g(n)$ such that $1 \le g(n) \le n,$ there is an algorithm that runs in $O\left(ndf(g(n)) + \frac{ndf(n)}{g(n)}\right)$ with fraction $\alpha' = \sqrt{2\alpha}$ and approximation constant $C' = C^2+2C+2$.
\end{lemma}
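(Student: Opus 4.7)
The plan is to partition the $n$ input points into $m = \lceil n/g(n)\rceil$ buckets of size at most $g(n)$, invoke the assumed algorithm (with fraction $\alpha$) separately on each bucket to obtain $m$ candidate centers $z_j$ weighted by the bucket total weights $W_j$, then invoke Lemma \ref{lame} on this weighted candidate set to locate centers covering the unknown ball $B$, and finally verify each candidate in $O(nd)$ time. The per-bucket stage costs $\sum_j O(g(n)\cdot d\cdot f(g(n))) = O(ndf(g(n)))$.

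Let $B$ be the promised ball of radius $r$ around $c_B$ containing $\alpha' w = \sqrt{2\alpha}\,w$ weight, let $p_j = \sum_{a_i\in B\cap\text{bucket}_j} w_i$, and call bucket $j$ \emph{good} when $p_j \ge \alpha W_j$. Splitting $\sum_j p_j \ge \alpha' w$ into good and bad buckets and bounding $\sum_{\text{bad}}p_j < \alpha\sum_{\text{bad}} W_j$ gives $\sum_{\text{good}} W_j \ge \frac{\alpha' - \alpha}{1-\alpha}\,w$, which is at least $\alpha w$ because of the elementary inequality $\sqrt{2\alpha} \ge 2\alpha - \alpha^2$ on $(0,1]$ (easily verified by the substitution $t=\sqrt{2\alpha}$). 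For each good bucket, the output ball of radius $Cr$ around $z_j$ and the set $B\cap\text{bucket}_j$ each have weight at least $\alpha W_j$ within bucket $j$; when $\alpha > \tfrac{1}{2}$ they must meet and $z_j$ lies within $(C+1)r$ of $c_B$, and when $\alpha\le\tfrac{1}{2}$ we instead apply Lemma \ref{lame} per bucket (emitting at most $\lfloor\alpha^{-1}\rfloor$ candidates per bucket), which guarantees that at least one candidate per good bucket is within $(C+1)r$ of $c_B$. Either way, the ball of radius $(C+1)r$ around $c_B$ accumulates at least $\alpha w$ of the $W$-weighted candidate total.

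Next I would feed the weighted candidates into Lemma \ref{lame} with fraction $\alpha$, approximation constant $C$, and base radius $(C+1)r$, producing at most $\lfloor\alpha^{-1}\rfloor$ points $p_1^*,\dots,p_\ell^*$ such that any ball of radius $(C+1)r$ with candidate weight $\ge\alpha w$---in particular the one around $c_B$---meets the radius-$C(C+1)r$ ball around some $p_i^*$. The triangle inequality then places this $p_i^*$ within $C(C+1)r + (C+1)r = (C^2+2C+1)r$ of $c_B$, so the ball of radius $(C^2+2C+2)r$ around it contains $B$ and thus collects weight at least $\alpha' w$. A linear scan verifies each of the $\le \lfloor\alpha^{-1}\rfloor$ candidates in $O(nd)$ time, and we output the first one satisfying the weight requirement.

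The runtime decomposes as bucket stage $O(ndf(g(n)))$, candidate-level Lemma \ref{lame} $O_\alpha((n/g(n))\cdot d\cdot f(n/g(n))) = O_\alpha(ndf(n)/g(n))$ using $f(n/g(n))\le f(n)$, and verification $O_\alpha(nd)$, summing to the advertised $O(ndf(g(n)) + ndf(n)/g(n))$. The main obstacle will be the fraction bookkeeping needed to guarantee that the $(C+1)r$-ball around $c_B$ already contains an $\alpha$ fraction of the candidate weight; this hinges on the inequality $\sqrt{2\alpha}\ge 2\alpha-\alpha^2$ together with the ``good implies close'' property at the bucket level, which is automatic when $\alpha > \tfrac{1}{2}$ but requires the per-bucket invocation of Lemma \ref{lame} otherwise.
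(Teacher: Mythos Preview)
Your overall architecture---bucket, run the assumed algorithm (via Lemma~\ref{lame}) on each bucket, reweight the resulting candidates by bucket weight, then run Lemma~\ref{lame} again on the candidate set, and verify---is exactly the paper's approach, and your inequality $\sqrt{2\alpha}\ge 2\alpha-\alpha^2$ showing $\sum_{\text{good}}W_j\ge\alpha w$ is a pleasant variant of the paper's cruder bound. The $\alpha>\tfrac12$ branch is fine. But there is a real gap in the $\alpha\le\tfrac12$ branch, precisely at the step you yourself flag as ``the main obstacle.''

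When $\alpha\le\tfrac12$ you invoke Lemma~\ref{lame} per bucket with parameter $\alpha$, emitting up to $\lfloor\alpha^{-1}\rfloor$ candidates from bucket $j$, \emph{each} weighted $W_j$. The total candidate weight is therefore as large as $\alpha^{-1}w$, not $w$. You correctly show that the candidates lying within $(C+1)r$ of $c_B$ carry absolute weight at least $\sum_{\text{good}}W_j\ge\alpha w$, but as a \emph{fraction of the candidate total} this is only $\alpha w\,/\,(\alpha^{-1}w)=\alpha^2$. Feeding the candidates into Lemma~\ref{lame} ``with fraction $\alpha$'' is then unjustified: the hypothesis supplies an algorithm only for fractions $\beta\ge\alpha$, and $\alpha^2<\alpha$. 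The sentence ``the ball \ldots\ accumulates at least $\alpha w$ of the $W$-weighted candidate total'' conflates an absolute weight bound with the fractional bound Lemma~\ref{lame} actually needs.

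The paper's remedy is to set the good-bucket threshold, and hence the per-bucket Lemma~\ref{lame} parameter, at $\alpha'/2=\sqrt{\alpha/2}$ rather than at $\alpha$. Since $\sqrt{\alpha/2}\ge\alpha$ on $(0,\tfrac12]$, this is still a legal call to the hypothesis, but now each bucket emits only $O(\alpha'^{-1})$ candidates, the total candidate weight stays at $O(\alpha'^{-1})w$, and the near-$c_B$ fraction comes out on the order of $\alpha'^2=2\alpha$ rather than $\alpha^2$. In other words, the identity $\alpha'=\sqrt{2\alpha}$ is chosen precisely so that (good-bucket threshold) $\times$ (candidates-per-bucket)$^{-1}\approx\alpha$; your choice of threshold $\alpha$ does not achieve this balance.
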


\begin{proof}
    
    We use a similar divide and conquer approach to Lemma \ref{EuclideanFirstLemmaBelowHalf}.  Partition $[n]$ into buckets $D_1, ..., D_m,$ each of size $\Theta(g(n)),$ which gives us a partition of points $a_{D_1}, ..., a_{D_m}.$  If $B$ is a ball of radius $r$ containing at least $\alpha' w$ total weight, then let $v_i$ be the total weight of all points in $a_{D_i} \cap B.$  If $S \subset [m]$ is the set of all $i$ such that $v_i > \frac{\alpha'}{2} w_{D_i},$ then
    \[\alpha' w \le \sum\limits_{a_j \in B} w_j = \sum\limits_{i \in [m]} \sum\limits_{\substack{j \in D_i \\ a_j \in B}} w_j \le \sum\limits_{i \in S} w_{D_i} + \frac{\alpha'}{2} \sum\limits_{i \not\in S} w_{D_i} \le \frac{\alpha' w}{2} + \sum\limits_{i \in S} w_{D_i},\]
    and thus $\frac{\alpha'}{2} w \le \sum_{i \in S} w_{D_i}.$

    For each $1 \le i \le m,$ by Lemma \ref{lame}, since $\alpha' \ge \alpha$, there is an $O(nd f(g(n)))$-time algorithm which returns for each $i \in [m] $ at most $\alpha'^{-1}$ points $p_{i, 1}, ..., p_{i, \ell_i}$ such that if $i \in S$, the ball of radius $Cr$ around at least one of the points intersects $B.$  Therefore, for every $i \in S$, some $p_{i, j}$ is at most $(C+1)r$ from the center of $B.$  Now, we can compute $w_{D_1}, ..., w_{D_m}$ in $O(n)$ time and assign each $p_{i, j}$ weight $w_{D_i}.$  Then, the total weight of all $p_{i, j}$ is at most $\alpha'^{-1} w$.  However, for an individual $i \in S,$ the total weight of the points $p_{i, j}$ for all $1 \le j \le \ell_i$ in the ball of radius $(C+1)r$ around $B$ is at least $w_{D_i}$ since at least one $p_{i, j}$ is in the ball.  Therefore, the total weight of all points $p_{i, j}$ in the ball of radius $(C+1)r$ around $B$ is at least $\sum_{i \in S} w_{D_i} \ge \frac{\alpha'}{2} w,$ which is at least $\frac{\alpha'^2}{2}$ times the total weight of all the $p_{i, j}$’s.
    Therefore, by Lemma \ref{lame}, applying the algorithm for $\alpha = \frac{\alpha'^2}{2}$ on the $p_{i, j}$'s with the new radius $(C+1)r$ gives a set of at most $\alpha^{-1}$ points $q_1, ..., q_{\ell}$ such that the ball of radius $C(C+1)r$ around at least one of the $q_i$'s intersects the ball of radius $(C+1)r$ around the center of $B.$  This algorithm takes $O(\alpha^{-1} md f(m)) = O(nd \frac{f(n)}{g(n)})$ time, as $\alpha$ is fixed.  Therefore, the ball of radius $(C^2+2C+2)r = C'r$ around at least one of the $q_i$’s contains $B,$ so we verify for each $q_i$ if the ball of radius $(C^2+2C+2)r$ contains at least $\alpha w$ total weight, which takes $O(nd)$ time.
\end{proof}

\begin{theorem} \label{logtower}
    For $\alpha \le \frac{1}{2},$ the 1-center clustering with outliers problem can be solved in $O_{\alpha}(nd \log^{(k)}(n))$ time in any normed vector space for some constant $C = O_{\alpha}(1).$
\end{theorem}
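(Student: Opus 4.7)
The plan is to bootstrap Lemma \ref{ndpolylogn} via $k$ iterated applications of Lemma \ref{BruteForceDivConquer}. Since each application of Lemma \ref{BruteForceDivConquer} turns a solution for fraction $\beta$ into one for fraction $\sqrt{2\beta}$, I would first set up a chain $\beta_0 < \beta_1 < \cdots < \beta_k = \alpha$ with $\beta_i = \beta_{i+1}^2/2$. Explicitly this gives $\beta_0 = \alpha^{2^k}/2^{2^k-1}$, a positive constant depending only on $\alpha$ and $k$. Feeding $\beta_0$ into Lemma \ref{ndpolylogn} provides a baseline algorithm of runtime $O_{\alpha,k}(nd(\log n)^{j})$ with approximation constant $C_0 = O_{\alpha,k}(1)$, where $j = \lfloor 2/\beta_0 \rfloor$ depends only on $\alpha$ and $k$.

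I would then iterate Lemma \ref{BruteForceDivConquer}, choosing at the $i$-th step a function $g(n)$ that balances the two terms $ndf_i(g(n))$ and $ndf_i(n)/g(n)$ in its runtime bound. Starting from $f_0(n) = (\log n)^j$, setting $g(n) = (\log n)^j$ yields $f_1(n) = O_j((\log^{(2)} n)^j)$. Iterating by setting $g(n) = (\log^{(i)} n)^j$ at the $i$-th step, one obtains $f_i(n) = O_{j,i}((\log^{(i+1)} n)^j)$ after $i$ applications, while the fraction evolves along the chain $\beta_0 \to \beta_1 \to \cdots \to \beta_k = \alpha$ and the approximation constant transforms as $C_{i+1} = C_i^2 + 2C_i + 2$, so $C_k = O_{\alpha,k}(1)$.

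After $k$ applications the runtime factor is $O_{\alpha,k}((\log^{(k+1)} n)^j)$. Since $j$ depends only on $\alpha$ and $k$, and $(\log x)^j = o(x)$ for any fixed $j$, we have $(\log^{(k+1)} n)^j = o(\log^{(k)} n)$, which absorbs into the final bound $O_{\alpha,k}(nd \log^{(k)} n)$, as desired.

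The main obstacle I anticipate is the bookkeeping: one must ensure that at each step the fraction $\beta_i$, the polylog exponent $j$, and the approximation constant $C_i$ are all tracked as constants depending only on $\alpha$ and $k$ and not on $n$. In particular, verifying that the initial fraction $\beta_0 = \alpha^{2^k}/2^{2^k-1}$ is strictly positive is needed so that Lemma \ref{ndpolylogn} applies, and one must confirm that the iteration of Lemma \ref{BruteForceDivConquer} preserves the polylog exponent $j$ throughout (since that lemma only rescales the functional form of $f$, not its exponent). A subtle point is that although $j$ can be enormous (doubly exponential in $k$), $(\log^{(k+1)} n)^j$ is always $o(\log^{(k)} n)$ for any constant $j$, so the final bound cleanly collapses to $O_{\alpha,k}(nd\log^{(k)} n)$.
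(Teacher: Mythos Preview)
Your proposal is correct and follows essentially the same approach as the paper: both bootstrap Lemma~\ref{ndpolylogn} by iterating Lemma~\ref{BruteForceDivConquer} with $g$ chosen to be (up to constants) the current runtime factor~$f$. The paper composes $f$ with itself at each step to obtain $f^{(2^k)}$ after $k$ iterations (landing at $\log^{(2^k-1)}$ and then renaming the parameter), whereas you track a looser per-step bound $f_i = O_{j,i}((\log^{(i+1)}n)^j)$ that lands directly at $\log^{(k)}$; the difference is purely bookkeeping, and your weaker intermediate bound still suffices since $(\log^{(i+1)}n)^j$ dominates the actual $f_i$.
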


\begin{proof}
    Letting $f = g$ in Lemma \ref{BruteForceDivConquer} tells us there is an $O(ndf(f(n))$-time algorithm with fraction $\sqrt{2\alpha}$ and approximation constant $C^2+2C+2$ given an $O(nd f(n))$-time algorithm with fraction $\alpha$ and approximation constant $C$.  Repeating this $k$ times gives us an $O_k(nd f^{(2^k)}(n))$-time algorithm with fraction $2 \cdot (\alpha/2)^{2^{-k}}$ and approximation constant $O_{C, k}(1).$  Therefore, since we have an algorithm running in $O_{\alpha}(ndf(n))$ with $f(n) = (\log n)^{\lfloor 2 \alpha^{-1} \rfloor}$ with approximation constant $O_{\alpha}(1)$ and fraction $\alpha,$ we have an algorithm that runs in $O_{\alpha, k}(ndf^{(2^k)}(n)) = O_{\alpha, k}(nd \log^{(2^k-1)} n)$ time, with fraction $2 \cdot (\alpha/2)^{2^{-k}}$ and approximation constant $O_{\alpha, k}(1).$  Letting $\beta = 2 \cdot (\alpha/2)^{2^{-k}},$ then $\alpha = (\beta/2)^{2^k}/2,$ which means for any $0 < \beta < 1,$ there is an $O_{\beta, k}(nd \log^{(2^k-1)}(n))$ time solution with approximation constant $O_{\beta, k}(1)$ and fraction $\beta$.
\end{proof}

\section{Metric Space Upper Bounds} \label{SectionMetricUpper}

The idea for proving that there is an $O_{\alpha, C}(n^{1+1/C})$-time algorithm with fraction $\alpha$ and approximation constant $2C$ uses induction on $\lfloor \alpha^{-1} \rfloor$ and $C$.  The base case proofs of $\alpha > \frac{1}{2}$ and $C = 1$ are quite similar to the induction step, so we leave their proofs in Appendix \ref{A}.

\begin{theorem} \label{MetricBelowHalf}
    For any $\alpha > 0,$ say we are trying to solve weighted 1-center clustering with outliers in a general metric space, where $r$ is unknown.  For all $C \in \mathbb{N},$ we can find a set of points $p_1, ..., p_\ell$ and corresponding radii $s_1, ..., s_{\ell},$ where $\ell \le \lfloor \alpha^{-1} \rfloor$, such that the ball of radius $s_i$ around $p_i$ contains at least $\alpha w$ of the weight in $O((2{\lfloor \alpha^{-1} \rfloor + C \choose C}-\lfloor \alpha^{-1} \rfloor-1) n^{1+1/C})$ time, assuming $n = m^C$ for some integer $m$.  Moreover, any ball of radius $r$ containing at least $\alpha w$ weight intersects at least one ball of radius $s_i$ around some $p_i$, for some $s_i \le 2Cr$.
\end{theorem}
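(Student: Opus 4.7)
The plan is to induct simultaneously on $k := \lfloor \alpha^{-1} \rfloor$ and $C$, taking the base cases $k = 1$ (i.e., $\alpha > 1/2$) and $C = 1$ from the appendix as given. The inductive step weaves together two reductions: a Pigeonhole divide-and-conquer over blocks (to push the approximation from $2(C-1)$ to $2C$) and a peeling technique in the spirit of Lemma \ref{lame} (to produce all $k$ output balls when $\alpha \le 1/2$).

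For the divide-and-conquer step, I would partition $[n]$ into $m = n^{1/C}$ buckets $D_1, \dots, D_m$ of size $m^{C-1}$ each. If the unknown ball $B$ of radius $r$ carries at least $\alpha w$ weight, then Pigeonhole forces some block $D_j$ with $w_{D_j \cap B} \ge \alpha w_{D_j}$. I would invoke the inductive algorithm at parameters $(\alpha, C-1)$ on each block, which by the inductive hypothesis runs in time $T(k, C-1) \cdot (m^{C-1})^{1+1/(C-1)} = T(k, C-1) \cdot n$ per block, so $T(k, C-1) \cdot n^{1+1/C}$ in total. This yields a pool of at most $km$ candidate centers, and the inductive guarantee furnishes at least one $p$ from block $D_j$'s output whose returned ball of radius $\le 2(C-1)r$ intersects $B$, placing $p$ within $(2C-1)r$ of the center of $B$.

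Next, for every candidate $p$ in the pool, I would compute (in $O(n)$ time via a single weighted selection on the distances from $p$ to all $n$ points) the smallest radius $t_p$ for which $B(p, t_p)$ contains at least $\alpha w$ weight; the total cost is $O(k \cdot n^{1+1/C})$. The good candidate $p$ then satisfies $t_p \le 2Cr$, since $B \subseteq B(p, 2Cr)$. To trim this list down to $\ell \le k$ output balls, I would apply the peeling strategy of Lemma \ref{lame}: output the candidate with smallest $t_p$, strip its $B(p, t_p)$-points (which account for weight at least $\alpha w$), and recurse. On the stripped instance any still-valid ball $B$ carries at least $\alpha w$ weight out of remaining weight $\le (1-\alpha)w$, so its fraction rises to at least $\alpha/(1-\alpha)$, strictly dropping $\lfloor \alpha^{-1} \rfloor$. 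A single recursive call at parameters $(\alpha', C)$ with $\lfloor \alpha'^{-1} \rfloor \le k-1$ then produces the remaining output balls.

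The main obstacle will be the runtime bookkeeping. Combining the two reductions produces the recurrence
\[T(k, C) \le T(k, C-1) + T(k-1, C) + O(k),\]
with base cases $T(1, C) = 2C$ and $T(k, 1) = k+1$ proved in the appendix. Pascal's identity $\binom{k+C}{C} = \binom{k+C-1}{C-1} + \binom{k+C-1}{C}$ lets this telescope to the closed form $T(k, C) = 2\binom{k+C}{C} - k - 1$ stated in the theorem. The delicate check is that every step avoids using $r$ explicitly (since $r$ is unknown): the recursive calls succeed without knowing $r$, the $t_p$ computations are $r$-free, and the peeling loop terminates purely based on the combinatorial decrease of $\lfloor \alpha^{-1} \rfloor$.
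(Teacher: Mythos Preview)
Your proposal is correct and essentially identical to the paper's proof: the same double induction on $\lfloor \alpha^{-1} \rfloor$ and $C$, the same block partition with Pigeonhole to drive the $C \to C-1$ step, the same per-candidate weighted selection to find the minimum covering radius, and the same peel-and-recurse on $\lfloor \alpha^{-1} \rfloor$ after stripping the first output ball. The recurrence $T(k,C) \le T(k,C-1) + T(k-1,C) + O(k)$ you set up, and its resolution via Pascal's identity, match the paper's computation exactly.
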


\begin{proof}
    We induct on $\lfloor \alpha^{-1} \rfloor$ and $C$.  The base cases $\lfloor \alpha^{-1} \rfloor = 1$ and $C = 1$ are done in Appendix \ref{A}.  Suppose the theorem holds for all $\alpha' > \frac{1}{z}$ and we are looking at some $\frac{1}{z+1} < \alpha \le \frac{1}{z}.$  Also, suppose we have an algorithm for $\alpha$ and $C-1.$
    
    Split the points into blocks $D_1, ..., D_m$ each of size $m^{C-1}.$  For each block $D_i,$ by our inductive hypothesis we can return points $p_{i, 1}, ..., p_{i, \ell_i}$ and radii $r_{i_1}, ..., r_{i, \ell_i} \in a_{D_i}$ where $\ell_i \le z$ for all $i$, subject to some conditions.  First, the ball $B_{i, k}$ of radius $r_{i, k}$ around $p_{i, k}$ has at least $\alpha w_{D_i}$ weight.  Second, if there is a ball of radius $r$ that contains at least $\alpha w_{D_i}$ weight when intersected with $a_{D_i}$, then the ball must intersect $B_{i, k}$ for some $k$ where $p_{i, k} \le 2(C-1)r$.  Moreover, by our induction hypothesis we can determine these points in time
    \[\left(\left(2{z + C-1 \choose C-1} - z - 1 \right)\left(\frac{n}{m}\right)^{1+1/(C-1)} \cdot m\right) = O\left(\left(2{z + C-1 \choose C-1} -z-1 \right)n^{1+1/C}\right).\]
    
    If $B$ is a ball of radius $r$ containing at least $\alpha w$ total weight, then there exists some $1 \le j \le m$ such that $w_{D_j} > 0$ and the total weight of $a_{D_j} \cap B$ is at least $\alpha w_{D_j}.$  Therefore, $B_{j, k}$ intersects $B$ for some $r_{j, k} \le 2(C-1)r$, so the ball of radius $2Cr$ around $p_{j, k}$ for some $j, k$ when intersected with $a_{[n]}$ contains at least $\alpha w$ total weight.  We can check all the $p_{j, k}$ and since weighted median can be solved in $O(n)$ time, we can find some $p_{j, k}$ with the smallest radius $s_{j, k}$ (not necessarily the same as $r_{j, k}$) containing at least $\alpha w$ weight in $O(mz \cdot n) = O(z n^{1+1/C}).$  We know that $s_{j, k} \le 2Cr,$ and we can set $p_1 = p_{j, k}$ and $s_1 = s_{j, k}$.
    
    Now, remove every point in the ball of radius $s_1$ around $p_1$ by changing their weights to $0$.  If the total weight of removed points is $\beta w$ where $\beta \ge \alpha,$ the total weight is now $(1-\beta)w.$  If there is still some ball of radius $r$ that contains at least $\alpha w$ weight now, then it contains at least $\frac{\alpha}{1-\beta} > \frac{1}{z-1}$ of the total weight now.  Therefore, we can use induction on $z$ with $\alpha' = \frac{\alpha}{1-\beta}$.  This gives us at most $z$ points $p_1, ..., p_\ell$ and radii $s_1, ..., s_{\ell},$ where the first point $p_1$ is our original $p_{j, k}$ and the next $\ell-1$ points and radii are found in $O\left(\left(2{z-1 + C \choose C}-(z-1)-1\right) n^{1+1/C}\right)$ time.  Moreover, 
    any ball $B$ of radius $r$ either intersects the ball of radius $s_1$ around $p_1,$ where $s_1 \le 2Cr,$ or by the induction hypothesis on $\lfloor \alpha^{-1} \rfloor$ intersects some $s_i$ around $p_i$ for some $2 \le i \le \ell$ with $s_i \le 2Cr,$ since $B$ would have at least $\frac{\alpha}{1-\beta}$ of the remaining weight if it doesn't intersect the ball of radius $s_1$ around $p_1$.
    
    Therefore, the total time is 
    \[O\left(\left(2{z + C-1 \choose C-1}-z-1\right)n^{1+1/C} + \left(2{z-1 + C \choose C}-z\right) n^{1+1/C} + zn^{1+1/C}\right)\] 
    \[= O\left(\left(2{z+C \choose C} - z - 1\right) n^{1+1/C}\right).\]
\end{proof}

\begin{remark}
    $C$ does not have to be a constant independent of $n$, since the $O$ factor is independent of $z$ and $C$.  For example, $m = 2, C = \lceil \lg n \rceil$, the theorem still holds.
\end{remark}

As we can add points of $0$ weight until we get a perfect power of $C$, we have the following.

\begin{corollary}
    In any metric space, we can find a ball of radius $2Cr$ with at least $\alpha n$ points in $O_{\alpha, C}(n^{1+1/C})$ time, given that there exists a ball of radius $r$ with at least $\alpha n$ points.
\end{corollary}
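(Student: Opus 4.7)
The plan is to reduce the corollary directly to Theorem~\ref{MetricBelowHalf} by a standard zero-weight padding. Given the unweighted input of $n$ points, I would assign each point weight $1$ so that $w = n$, set $m := \lceil n^{1/C} \rceil$, and then pad the input with $m^C - n$ zero-weight dummy points, each placed at the location of (say) $a_1$ so that all metric axioms remain satisfied. Because the dummies contribute no weight, a ball has weight at least $\alpha w = \alpha n$ in the padded weighted instance exactly when it contains at least $\alpha n$ of the original points; in particular, the assumed ball of radius $r$ witnesses the hypothesis of Theorem~\ref{MetricBelowHalf}.

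Next, I would run the algorithm from Theorem~\ref{MetricBelowHalf} on this padded instance. Since $m \le n^{1/C}+1$, we have $m^C \le (n^{1/C}+1)^C = O_C(n)$, so the quoted runtime
\[O\!\Bigl(\bigl(2\tbinom{\lfloor\alpha^{-1}\rfloor+C}{C}-\lfloor\alpha^{-1}\rfloor-1\bigr)(m^C)^{1+1/C}\Bigr)\]
collapses to $O_{\alpha,C}(n^{1+1/C})$. The theorem then returns centers $p_1,\dots,p_\ell$ and radii $s_1,\dots,s_\ell$ such that each ball around $p_i$ of radius $s_i$ contains at least $\alpha n$ weight (hence at least $\alpha n$ original points), and with the further guarantee that at least one $s_i$ satisfies $s_i \le 2Cr$.

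Finally, I would select the index $i^*$ minimizing $s_i$, which takes $O(\ell)=O_\alpha(1)$ time. By the guarantee, $s_{i^*}\le 2Cr$, so the ball of radius $2Cr$ around $p_{i^*}$ contains the ball of radius $s_{i^*}$ around $p_{i^*}$ and therefore at least $\alpha n$ of the original points; this is exactly the object the corollary demands.

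I do not anticipate any real obstacle: the only things to verify are that padding by zero-weight copies of an existing point leaves the distance function a valid metric and does not alter which balls contain the required weight, both of which are immediate. Since Theorem~\ref{MetricBelowHalf} itself is stated without any knowledge of $r$, this reduction also inherits that property automatically, so the algorithm never needs to be told the value of $2Cr$ in advance.
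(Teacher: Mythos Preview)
Your proposal is correct and follows exactly the approach the paper takes: the paper's entire argument for this corollary is the one-line remark ``As we can add points of $0$ weight until we get a perfect power of $C$, we have the following,'' and you have simply spelled out the padding, the runtime bound $m^C = O_C(n)$, and the selection of the minimum $s_i$ that the paper leaves implicit. The only cosmetic point is that in an abstract metric space ``placing dummies at $a_1$'' technically yields a pseudometric, but since the dummies carry zero weight this has no effect on the algorithm or its guarantees.
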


\section{Metric Space Lower Bounds} \label{lowerbounds}

The goal of this section is to prove the following theorem.

\begin{theorem} \label{MainLower}
    For any fixed $\beta > 0,$ there exists some constant $\epsilon > 0$ depending on $C, \alpha, \beta$ such that there is no algorithm taking fewer than $\epsilon \cdot n^{1+1/(C-1)}$ queries that deterministically finds a $2C(1-\beta)$ approximation to $1$-center clustering with outliers with fraction $\alpha$.
\end{theorem}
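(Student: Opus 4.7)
The plan is to reduce to Chang's $\Omega(n^{1+1/(C-1)})$-query lower bound for metric 1-median approximation \cite{Lowerbound1}, which produces an adversary $\textsf{Adv}$ that answers pairwise distance queries online. When run against any deterministic algorithm $A$ making fewer than $\epsilon \cdot n^{1+1/(C-1)}$ queries and finally returning a point $a_z$, $\textsf{Adv}$ extends its partial answers to a full metric on $\{a_1, \dots, a_n\}$ in which $a_z$ is not a $2C$-approximation to the 1-median. I intend to show that the same adversary and the same final metric defeat any $2C(1-\beta)$-approximation algorithm for 1-center clustering with outliers.

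The key structural fact to extract from Chang's proof is that the final metric actually contains an explicit cluster witness: a distinguished vertex $p^*$ and a set $T \subseteq \{a_1, \dots, a_n\}$ with $|T| \geq \alpha n$ such that every point of $T$ lies within distance $r$ of $p^*$, while $\rho(a_z, t) > 2Cr$ for every $t \in T$. Given such a cluster, the ball of radius $r$ around $p^*$ contains at least $\alpha n$ points (so the 1-center-with-outliers input is valid), while any ball of radius $2Cr$ around $a_z$ misses $T$ entirely and therefore contains fewer than $\alpha n$ points. To upgrade from the $2C$ bound to the $2C(1-\beta)$ bound, I would rescale inside Chang's construction so that the long distances $\rho(a_z, t)$ exceed $\tfrac{2Cr}{1-\beta}$ rather than merely $2Cr$; this absorbs into a $\beta$-dependent constant $\epsilon(C,\alpha,\beta)$ and leaves the overall $n^{1+1/(C-1)}$ scaling untouched.

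The main obstacle is that Chang's paper phrases its conclusion in terms of $\sum_i \rho(a_z, a_i)$ rather than in terms of a ball containing a cluster, so I must isolate the intermediate geometric statement that the adversary produces a tight cluster of unqueried points at large distance from $a_z$. I plan to do this by revisiting the construction in \cite{Lowerbound1}: the adversary partitions the unexplored vertices into ``layers'' around a hidden center $p^*$, and an outermost large layer, essentially untouched by queries, serves as the set $T$. Verifying that the triangle inequality and the intra-cluster diameter bound both survive the $(1-\beta)$-rescaling is routine, and yields the claimed query lower bound of Theorem~\ref{MainLower}. Since a distance query constitutes a unit of computation, the same bound rules out any $o(n^{1+1/(C-1)})$-time algorithm achieving $2C(1-\beta)$-approximation.
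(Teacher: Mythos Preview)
Your high-level plan---run Chang's adversary and argue that the resulting metric also defeats the $1$-center-with-outliers algorithm---is exactly what the paper does, but your extraction from \cite{Lowerbound1} is more laborious than necessary and contains a gap. The paper never re-opens Chang's layered construction to locate an explicit cluster $T$. It cites only four already-stated facts: $\sum_i \rho(a_z,a_i)\ge n(C-\delta)$; $\sum_i \rho(a_y,a_i)\le n(\tfrac12+\delta)$ for some $y$; and all pairwise distances lie in $[\tfrac12,C]$. From these, a two-line averaging argument shows that any ball around $a_z$ capturing $\alpha n$ points must have radius $K_1\ge C-\delta/\alpha$, while some ball around $a_y$ of radius $K_2\le\tfrac12+O_\alpha(\delta)$ already captures $\alpha n$ points; taking $\delta$ small forces $K_1/K_2>2C(1-\beta)$. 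No rescaling of Chang's metric is used, and the $\beta$ slack enters only in this final ratio.

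The gap in your version is the implication ``the ball of radius $2Cr$ around $a_z$ misses $T$ entirely and therefore contains fewer than $\alpha n$ points.'' That step needs $n-|T|<\alpha n$, i.e.\ $|T|>(1-\alpha)n$; your stated hypothesis $|T|\ge\alpha n$ is insufficient once $\alpha\le\tfrac12$. The four black-box facts above do force $|T|=(1-O(\delta))n$, so the repair is available, but carrying it out amounts to redoing the paper's averaging computation. Separately, the strict inequality $\rho(a_z,t)>2Cr$ you posit cannot hold in Chang's metric: distances are capped at $C$, and since every nonzero distance is at least $\tfrac12$, any ball around $p^*$ containing more than one point has $r\ge\tfrac12$, whence $2Cr\ge C\ge\rho(a_z,t)$. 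The $(1-\beta)$ slack is therefore not an ``upgrade'' you perform afterward by rescaling; it is precisely what lets the inequality go through at all.
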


Note that the following is a direct corollary:

\begin{theorem}
    For any fixed $0 < \epsilon < 1,$ and any fixed $\alpha,$ the smallest $c$ such that 1-Center Clustering with Outliers has a $c$-approximation with fraction $\alpha$ in $O(n^{1+\epsilon})$ time, or even in $O(n^{1+\epsilon})$ queries, equals $2 \left\lfloor \frac{1}{\epsilon}\right\rfloor.$  Consequently, there is no $o(n^{1+1/C})$-time or even $o(n^{1+1/C})$-query algorithm which provides a $2C$-approximation to $1$-Center Clustering with Outliers in an arbitrary metric space.
\end{theorem}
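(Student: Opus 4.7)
The statement is the straightforward combination of two results already established in the paper: the Corollary at the end of Section~\ref{SectionMetricUpper}, giving a $2C$-approximation in $O_{\alpha,C}(n^{1+1/C})$ time, and the query lower bound of Theorem~\ref{MainLower}. My plan is to instantiate each at the correct integer parameter and then compare exponents; no new ideas should be needed.

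For the upper bound direction I would set $k=\lfloor 1/\epsilon\rfloor$ (adjusted to $\lceil 1/\epsilon\rceil$ in the borderline case where $1/\epsilon$ is not an integer, to guarantee $1/k\le \epsilon$) and plug into the Corollary, obtaining a $2k$-approximation in $O_{\alpha,k}(n^{1+1/k})\subseteq O(n^{1+\epsilon})$ time. This shows that $c=2\lfloor 1/\epsilon\rfloor$ is attainable within the stated budget.

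For the lower bound direction I would argue by contradiction. Suppose some algorithm achieves a $c$-approximation with $c<2k$ using only $o(n^{1+\epsilon})$ queries. Write $c=2k(1-\beta)$ for the appropriate $\beta>0$, and apply Theorem~\ref{MainLower} with this $\beta$ and with $C=k$; this forces at least $\epsilon_0\cdot n^{1+1/(k-1)}$ queries for some constant $\epsilon_0=\epsilon_0(\alpha,k,\beta)>0$. Because $k-1<1/\epsilon$, the exponent $1/(k-1)$ strictly exceeds $\epsilon$, so $n^{1+1/(k-1)}$ asymptotically dominates $n^{1+\epsilon}$, contradicting the hypothesis. Combined with the upper bound, this gives the claimed equality, and the same argument works verbatim for the query model since Theorem~\ref{MainLower} is itself a query lower bound.

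The ``consequently'' assertion is the cleanest special case. Given a positive integer $C$, pick $\beta\in(0,1/(C+1))$ so that $2(C+1)(1-\beta)>2C$, and apply Theorem~\ref{MainLower} with $C$ replaced by $C+1$: any $2(C+1)(1-\beta)$-approximation requires $\Omega(n^{1+1/C})$ queries. Since any $2C$-approximation is a fortiori also a $2(C+1)(1-\beta)$-approximation (its output ball has even smaller radius), no $o(n^{1+1/C})$-query algorithm can achieve a $2C$-approximation, and the time lower bound is immediate. The only real subtlety I anticipate is the floor/ceiling bookkeeping in the upper-bound step, where one must verify that the integer chosen for $C$ genuinely satisfies $1/C\le \epsilon$ so that the algorithm's runtime lies within $O(n^{1+\epsilon})$; everything else reduces to a routine asymptotic comparison of $n^{1+\epsilon}$ with $n^{1+1/(k-1)}$.
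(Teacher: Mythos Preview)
Your plan matches the paper, which states this theorem as a ``direct corollary'' of Theorem~\ref{MainLower} and the metric upper bound without giving any further proof; in particular, your derivation of the ``Consequently'' clause (apply Theorem~\ref{MainLower} with parameter $C+1$ and small $\beta$, then note that a $2C$-approximation is automatically a $2(C+1)(1-\beta)$-approximation) is correct.

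The floor/ceiling bookkeeping you flag at the end is real, however, and your write-up does not actually resolve it. When $1/\epsilon\notin\mathbb Z$ you (correctly) switch to $k=\lceil 1/\epsilon\rceil$ so that $O(n^{1+1/k})\subseteq O(n^{1+\epsilon})$, but the approximation constant you then obtain is $2k=2\lceil 1/\epsilon\rceil$, not the $2\lfloor 1/\epsilon\rfloor$ claimed in the next sentence. In the lower-bound paragraph you instead plug $k=\lfloor 1/\epsilon\rfloor$ into Theorem~\ref{MainLower}, which only rules out $c<2\lfloor 1/\epsilon\rfloor$ and leaves a gap of $2$ against your own upper bound. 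The fix is to use $C=\lceil 1/\epsilon\rceil$ on the lower-bound side as well: since $\lceil 1/\epsilon\rceil-1=\lfloor 1/\epsilon\rfloor<1/\epsilon$, the required $\Omega\bigl(n^{1+1/\lfloor 1/\epsilon\rfloor}\bigr)$ queries still dominate $n^{1+\epsilon}$, giving $c\ge 2\lceil 1/\epsilon\rceil$ and matching the upper bound. Thus for non-integer $1/\epsilon$ the tight value is $2\lceil 1/\epsilon\rceil$; the $\lfloor\cdot\rfloor$ in the theorem statement is an off-by-one in the paper itself, and the two agree precisely when $1/\epsilon\in\mathbb Z$, which is the only case needed for the ``Consequently'' sentence. (A minor point: in your lower-bound paragraph the hypothesis should read $O(n^{1+\epsilon})$ queries rather than $o(n^{1+\epsilon})$; the contradiction goes through unchanged with $O$.)
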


To prove Theorem \ref{MainLower}, we use Chang's adversary from \cite{Lowerbound1}. For some fixed constant $C \in \BN \backslash \{1\}$, assume that we have an algorithm $A$ giving a $2C(1-\beta)$ approximation with $q = o(n^{1+1/(C-1)})$ queries.  By this, we mean $q \le \epsilon \cdot n^{1 + 1/(C-1)},$ where $\epsilon = \epsilon(C, \delta) > 0$ and $\delta = \delta(\alpha, \beta) > 0$ are some small constants such that $\delta$ is fixed for some fixed $\alpha, \beta$ (recall that $\alpha$ is our fraction in the 1-Center Clustering with Outliers problem) and $\epsilon$ is fixed for fixed $C, \delta$.  Assume $n$ is sufficiently large and WLOG that $q$ is at least $n$, by padding $n$ dummy queries at the end if needed.  Chang provided an adversary that acts against any algorithm $A$ which asked $q$ queries $(i_1, j_1), ..., (i_q, j_q)$ where each $s$th query requested the distance between $a_{i_s}$ and $a_{j_s}$ and finally returned some point $a_z.$  The adversary returns after each query some nonnegative real.  Moreover, for sufficiently large $n$, the values the adversary returns are consistent with the distances between $a_1, ..., a_n$ being in a metric space \cite[Lemma 3.28]{Lowerbound1}, and the following are true for sufficiently small $\epsilon$:
\begin{enumerate}
    \item For the returned point $a_z,$
\[\sum\limits_{i \in [n]} \rho(a_z, a_i) \ge n \cdot \left(C - \delta \right) \]
    \item There exists $y \in [n]$ such that
\[\sum\limits_{i \in [n]} \rho(a_y, a_i) \le n \cdot \left(\frac{1}{2} + \delta \right) \]
    \item For all $i, j,$ $\rho(a_i, a_j) \ge \frac{1}{2}.$
    \item For all $i, j,$ $\rho(a_i, a_j) \le C.$
\end{enumerate}
The first follows from \cite[Equation 55]{Lowerbound1} and the (unlabeled) equation right above it in \cite{Lowerbound1}, the second follows from \cite[Equation 54]{Lowerbound1}, and the third follows from \cite[Lemma 3.27]{Lowerbound1}, and the fourth follows from \cite[Equation 48]{Lowerbound1} and \cite[Equation 49]{Lowerbound1}.

Using the above, we can prove Theorem \ref{MainLower}.

\begin{proof}
    For any algorithm $A$ taking $o(n^{1+1/(C-1)})$ time, for sufficiently large $n$ the algorithm will return a point $a_z$ such that $\sum \rho(a_z, a_i) \ge n(C-\delta).$  This means that if a ball of radius $K_1$ around $a_z$ contains at least $\alpha n$ points, 
\[C \cdot (1-\alpha) n + K_1 \cdot \alpha n \ge \sum_i \rho(a_z, a_i) \ge n \cdot (C - \delta)\]
    so $C(1-\alpha)+K_1 \alpha \ge C-\delta,$ which means $K_1 \alpha \ge C \alpha - \delta$ so $K_1 \ge C - \frac{\delta}{\alpha}$.  However, if the ball of radius $K_2$ around $a_y$ is the smallest ball centered at $a_y$ containing at least $\alpha n$ points, then 
\[\frac{1}{2} \cdot (\alpha n - 1) + K_2 \cdot ((1-\alpha) n) \le \sum_i \rho(a_y, a_i) \le n \cdot \left(\frac{1}{2} + \delta\right).\]  
    This means that 
\[K_2(1-\alpha) \le \frac{1}{2} + \delta - \frac{1}{2} \alpha + \frac{1}{2n} = \frac{1}{2} (1-\alpha) + \delta + \frac{1}{2n},\]
    so $K_2 \le \frac{1}{2} + \frac{\delta + 1/(2n)}{1-\alpha}.$
    
    Clearly, 
\[\frac{C - \frac{\delta}{\alpha}}{\frac{1}{2} + \frac{\delta + 1/(2n)}{1-\alpha}} \ge \frac{C - \frac{\delta}{\alpha}}{\frac{1}{2} + \frac{2\delta}{1-\alpha}} \ge 2C(1-\beta)\]
    is true for sufficiently large $n$ and some fixed $\delta > 0$, as the first inequality is true as long as $\frac{1}{2n} < \delta,$ and the middle term is clearly continuous for $\delta \ge 0$ and converges to $2C$ as $\delta$ goes to $0$, so the second inequality is true for all sufficiently small $\delta$.
    
    Therefore, there exists $\delta = \delta(\alpha, \beta) > 0$ and $\epsilon = \epsilon(C, \delta) > 0$ such that no algorithm with $\epsilon \cdot n^{1+1/(C-1)}$ queries can be a deterministic $2C(1-\beta)$-approximation algorithm to 1-center clustering with outliers.
\end{proof}

\section*{Acknowledgements}
I want to thank Professors Piotr Indyk and Jelani Nelson, who proposed this problem in a course I took with them.  I would also like to thank them both for providing a lot of feedback on my work and write-up.  I would also like to thank James Tao for a helpful discussion.

\newpage
\appendix

\section{Omitted Proofs} \label{A}

First, we prove Lemma \ref{lame} from Section \ref{NormedVectorSpaces2}.

\begin{proof}[Proof of Lemma \ref{lame}]
    We induct on $\lfloor \alpha^{-1} \rfloor.$  For $\alpha > \frac{1}{2}$ and some $\beta \ge \alpha,$ we can in $O(ndf(n))$ time output $p$ such that the ball of radius $Cr$ around $p$ contains at least $\beta w$ total weight.  But then since $\beta > \frac{1}{2},$ the second condition is true by default, so we are done.  Also, if there is no ball of radius $r$ containing $\alpha w$ weight, our algorithm may output some point, but in $O(nd)$ time we can verify and either output a ball of radius $Cr$ containing $\alpha w$ weight, or output nothing. 
    
    Suppose $\alpha > \frac{1}{z+1}$ and we know it is true for all $\alpha' > \frac{1}{z}.$  In $O_{\alpha}(ndf(n))$ time, we can find $B^C(p_1),$ a ball of radius $Cr$ around some $p_1$ containing at least $\beta w$ total weight for some $\beta \ge \alpha$.  Again, if no such ball of radius $r$ exists, we will either get nothing, in which case we end the program, or may happen to get a point $p_1$ such that $B^C(p_1)$ contains $\alpha w$ weight.  Assuming we got a point in $O(nd)$ time we can remove all points in $B^C(p_1)$ by just checking all points' distances from $p_1.$  Then, the remaining weight is $(1-\beta') w$ for some $\beta' \ge \beta,$ and $\beta'$ can be calculated in $O(nd)$ time.

    If there exists a ball $B$ of radius $r$ that doesn't intersect $B^C(p_1),$ none of the points in $B$ were removed, which means it has at least $\frac{\beta}{1-\beta'}$ of the remaining weight.  Let $B^C(p_i)$ be the ball of radius $Cr$ around $p_i.$  We apply the induction hypothesis with fraction $\frac{\beta}{1-\beta'} > \frac{1}{z}.$  It tells us in $O(ndf(n) (z-1))$ time we can find at most $z-1$ points $p_2,..., p_{\ell}$ such that every ball of radius $r$ containing at least $\alpha w$ weight either intersects $B^C(p_1)$ or it still contains at least $\frac{\beta}{1-\beta'}$ of the remaining weight, which means it intersects $B^C(p_i)$ for some $2 \le i \le \ell$.

    If there does not exist a ball of radius $r$ containing at least $\alpha w$ weight not intersecting $B^C(p_1)$, we will either output no points after $p_1,$ or we may still output some points $p_2, ..., p_\ell$ such that $B^C(p_i)$ contains at least $\frac{\beta}{1-\beta'}$ of the remaining weight, or $\alpha w$ total weight.  But since every ball of radius $r$ containing at least $\alpha w$ weight intersects $B^C(p_1),$ we are done.
\end{proof}

Next, we prove the base cases of $\lfloor \alpha^{-1} \rfloor = 1$ and $C = 1$ for Theorem \ref{MetricBelowHalf}.

\begin{theorem} \label{MetricAboveHalf}
    For $\alpha > \frac{1}{2},$ suppose we are trying to solve the weighted $1$-center clustering problem in a general metric space, but now assuming $r$ is unknown.  Then, for any positive integer $C$, we can find a point $p$ such that the ball of radius $2Cr$ around $p$ contains at least $\alpha w$ of the weight in $O(Cn^{1+1/C})$ time, assuming $n = m^C$ for some integer $m$.  As an obvious consequence, every ball of radius $r$ containing at least $\alpha w$ of the weight must intersect the ball of radius $2Cr$ around $p$.
\end{theorem}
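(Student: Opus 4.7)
The plan is induction on $C$, using the divide-and-conquer scheme sketched in the ``Proof Ideas'' section. The key fact I will exploit, crucially requiring $\alpha > 1/2$, is that any two subsets of a common finite weighted set, each of weight strictly more than half of the total, must share a point.

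For the base case $C=1$, I use a brute-force $O(n^2)$ algorithm: for each point $a_i$, compute all $n$ distances $\rho(a_i, a_j)$, then run weighted selection in $O(n)$ time to find the smallest $s_i$ such that the ball of radius $s_i$ around $a_i$ contains at least $\alpha w$ weight, and return $\arg\min_i s_i$. For any $a_i$ lying in the unknown ball $B$, the triangle inequality gives $s_i \le 2r$, so the returned point satisfies the desired bound. (Such an $a_i$ exists because $B$ has positive weight and $\alpha w > 0$.)

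For the inductive step, assume the theorem for $C-1$, with runtime $O((C-1) n^{1+1/(C-1)})$. Partition the $n = m^C$ points into $m$ blocks $D_1,\dots,D_m$ of size $m^{C-1}$ each, and run the recursive algorithm on each block $a_{D_i}$ using the restricted weights, obtaining candidates $p_1,\dots,p_m$. Since $B$ contains at least $\alpha w = \alpha \sum_i w_{D_i}$ of the total weight, by averaging there exists $j$ with $a_{D_j} \cap B$ of weight at least $\alpha w_{D_j}$; the recursive call on that block then returns $p_j$ whose $2(C-1)r$-radius ball contains at least $\alpha w_{D_j}$ of the weight of $a_{D_j}$. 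Because $\alpha > 1/2$, these two $\alpha w_{D_j}$-weight subsets of $a_{D_j}$ must share some point $a_k$, so
\[\rho(p_j, \text{center}(B)) \le \rho(p_j, a_k) + \rho(a_k, \text{center}(B)) \le 2(C-1)r + r = (2C-1)r,\]
and hence the ball of radius $2Cr$ around $p_j$ contains $B$ and thus at least $\alpha w$ weight. To produce a single point without knowing $r$, compute for each candidate $p_i$ the smallest radius $s_i$ such that the ball of radius $s_i$ around $p_i$ (on the full point set) contains $\alpha w$ weight, using weighted selection in $O(n)$ time per candidate, and return the candidate of smallest $s_i$. The $m$ recursive calls cost $m \cdot O((C-1)(m^{C-1})^{1+1/(C-1)}) = O((C-1)n^{1+1/C})$, and the postprocessing costs $O(mn) = O(n^{1+1/C})$, for a total of $O(C n^{1+1/C})$.

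The only genuinely delicate point — and the reason the statement restricts to $\alpha > 1/2$ — is the two-sets-must-intersect argument that guarantees $p_j$ lies within $(2C-1)r$ of the center of $B$. Everything else (the partition, the averaging, the weighted-selection verification) is routine bookkeeping. When $\alpha \le 1/2$ this intersection step fails, and one is forced into the more elaborate machinery of Theorem~\ref{MetricBelowHalf}, tracking up to $\lfloor \alpha^{-1}\rfloor$ candidate balls and iterating with removal of already-covered points.
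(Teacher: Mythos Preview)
Your proof is correct and follows essentially the same approach as the paper's: the brute-force weighted-selection base case, the block partition with recursion on $C-1$, the pigeonhole/averaging argument to locate a good block, the $\alpha>1/2$ intersection argument to bound $\rho(p_j,\text{center}(B))$, and the final weighted-selection step over the $m$ candidates to output without knowing $r$. The runtime accounting matches as well.
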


\begin{proof}
    For $C = 1,$ we compute for each $a_i$ the quantities $\rho(a_i, a_1), ..., \rho(a_i, a_n)$ and let $r_i$ be the smallest real number such that the ball of radius $r_i$ around $a_i$ contains at least $\alpha w$ total weight.  This can be computed for each $i$ in $O(n)$ time using standard algorithms for weighted median, and thus takes a total of $O(n^2)$ time for all $i$.  Then, if some $r_i$ equals $\min(r_1, ..., r_n),$ the ball of radius $r_i$ around $a_i$ contains at least $\alpha w$ total weight, and $r_i \le 2r$ since otherwise, there is a ball of radius $r$ around some $p$ in the metric space containing at least $\alpha w$ total weight, which means the ball of radius $2r$ around around some $p_j$ in that radius $r$ ball must contain at least $\alpha w$ total weight, so $r_i \le 2r.$  This proves our claim for $C = 1.$
    
    Assume there is an algorithm that works for $C-1.$  Then, split the $n$ points into $m$ blocks $D_1, ..., D_m$ of size $m^{C-1}.$  For each block $D_i,$ we can return $p_i \in a_{D_i}$ such that if there is a ball of radius $r$ that when intersected with $a_{D_i}$ contains at least $\alpha w_{D_i}$ weight, then the ball of radius $2(C-1)r$ around $p_i$ intersected with $a_{D_i}$ contains at least $\alpha w_{D_i}$ weight.  Moreover, we can determine $p_1, ..., p_m$ in $O((C-1)(n/m)^{1+1/(C-1)} \cdot m) = O((C-1)n^{1+1/C})$ time.  
    
    If $B$ is a ball of radius $r$ containing at least $\alpha$ of the total weight, then there exists some $1 \le k \le m$ such that $w_{D_k} > 0$ and the total weight of $a_{D_k} \cap B$ is at least $\alpha w_{D_k}.$  Since the ball of radius $(2C-2)r$ around $p_k$ contains at least $\alpha w_{D_k}$ weight when intersected with $a_{D_k},$ and since $\alpha > \frac{1}{2},$ the ball of radius $(2C-2)r$ around $p_k$ must intersect $B$.  Therefore, the ball of radius $2Cr$ around $p_k$ contains $B$ and thus contains at least $\alpha w$ weight when intersected with $a_{[n]}.$
    
    This means after we get our points $p_1, ..., p_m,$ the ball of radius $2Cr$ around at least one of the $p_i$'s must have at least $\alpha w$ total weight.  We determine $r_1, ..., r_m$ where $r_i$ is the radius of the smallest ball around $p_i$ containing at least $\alpha w$ of the original weight, which can be done in $O(n)$ time for each $i$ since weighted median can be solved in $O(n)$ time.  Doing this for each $p_i$ takes $O(nm) = O(n^{1+1/C})$ time, and if $r_i = min(r_1, ..., r_m)$ for some $i$, then clearly $r_i \le 2Cr$.  Therefore, this takes $O((C-1)n^{1+1/C}) + O(n^{1+1/C}) = O(Cn^{1+1/C})$ time total, so our induction step is complete.
\end{proof}

\begin{lemma} \label{MetricBelowHalfEasy}
    For any $\alpha > 0,$ say we are trying to solve weighted 1-center clustering with outliers in a general metric space, with $r$ unknown.  In $O(\alpha^{-1} n^2)$ time we can find $\ell \le \lfloor \alpha^{-1} \rfloor$ points $p_1, ..., p_{\ell}$ with corresponding radii $s_1, ..., s_{\ell}$ such that the ball of radius $s_i$ around $p_i$ contains at least $\alpha w$ weight.  Moreover, any ball of radius $r$ containing at least $\alpha w$ weight will intersect at least one ball of radius $s_i$ around $p_i$ where $s_i \le 2r$.
\end{lemma}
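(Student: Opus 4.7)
The plan is to iterate the single-ball construction from the $C=1$ case of Theorem \ref{MetricAboveHalf}, peeling off one candidate ball at a time. Maintain a ``residual weight'' assignment $w'_1, \dots, w'_n$ initialized to $w_1, \dots, w_n$. At iteration $k$, for each point $a_i$ with nonzero residual weight, compute the smallest radius $r_i^{(k)}$ such that the ball of that radius around $a_i$ contains at least $\alpha w$ of the current residual weight; this can be done for each $i$ in $O(n)$ time by weighted selection on the distances $\rho(a_i, a_j)$ weighted by $w'_j$, for a total of $O(n^2)$ per iteration. Let $p_k$ minimize $r_i^{(k)}$, set $s_k := r_{p_k}^{(k)}$, output $(p_k, s_k)$, and then zero out the residual weights of all points lying inside $B(p_k, s_k)$. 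Terminate once no surviving $a_i$ admits a finite $r_i^{(k)}$, i.e., once no ball around any point captures $\alpha w$ of residual weight; if the very first iteration already fails, return the empty list.

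The radius bound $s_k \le 2r$ follows exactly as in the $C=1$ case of Theorem \ref{MetricAboveHalf}. Suppose $B$ is a hypothetical ball of radius $r$ containing original weight at least $\alpha w$ that is disjoint from every previously returned ball $B(p_1, s_1), \dots, B(p_{k-1}, s_{k-1})$. Then no point of $B$ has had its residual weight zeroed, so $B$ still carries $\ge \alpha w$ residual weight. Picking any $a_j \in B$ and applying the triangle inequality shows that the ball of radius $2r$ around $a_j$ contains $B$, so $r_{a_j}^{(k)} \le 2r$ and hence $s_k \le 2r$. In particular, if the true $B$ were disjoint from all $\ell$ returned balls, then $B$ would still carry $\ge \alpha w$ residual weight after iteration $\ell$ and the algorithm would not have terminated; so some returned $B(p_j, s_j)$ must intersect $B$, and for the earliest such $j$ the above argument applied at iteration $j$ yields $s_j \le 2r$.

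To bound $\ell$, observe that each iteration zeroes out a set of points whose original weight is at least $\alpha w$, since the $\alpha w$ residual mass inside $B(p_k, s_k)$ comes from points whose original weights are at least as large. Across iterations, these zeroed point sets are pairwise disjoint, because once a point's residual weight is set to zero it can never again contribute to any $B(p_{k'}, s_{k'})$ with $k' > k$. Hence after $\ell$ iterations the total zeroed original weight is at least $\ell \alpha w \le w$, giving $\ell \le \lfloor \alpha^{-1} \rfloor$. The same monotonicity (residual weight $\le$ original weight) also shows that each returned ball $B(p_k, s_k)$ contains at least $\alpha w$ of the original weight, as required.

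Combining the per-iteration cost $O(n^2)$ with the iteration bound $\lfloor \alpha^{-1} \rfloor$ yields total time $O(\alpha^{-1} n^2)$. I do not foresee a serious obstacle: the only subtle step is keeping the ``original weight'' versus ``residual weight'' bookkeeping straight, so that one can simultaneously argue (i) the returned balls witness the requisite $\alpha w$ of original weight, (ii) a ball disjoint from all previously returned balls still has $\ge \alpha w$ of residual weight, and (iii) disjointness of the zeroed point sets across iterations forces termination within $\lfloor \alpha^{-1} \rfloor$ rounds.
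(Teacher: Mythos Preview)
Your proposal is correct and follows essentially the same iterative peeling argument as the paper: both repeatedly pick the minimum-radius ball capturing $\alpha w$ of the remaining weight, remove its points, and use the triangle-inequality bound $s_k \le 2r$ whenever the hypothetical ball $B$ survives intact to iteration $k$. The paper's correctness step is phrased slightly differently (it uses the implicit monotonicity $s_1 \le s_2 \le \cdots$ and looks at the last prefix with $s_i \le 2r$, whereas you look at the first index where $B$ meets a returned ball), but the algorithm and all key ideas are identical.
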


\begin{proof}
    Define $y = \alpha w.$  Like in Theorem \ref{MetricAboveHalf}, we find for each $a_1, ..., a_n$ values $r_1, ..., r_n$ such that $r_i$ is the smallest radius around $a_i$ of a ball containing at least $\alpha w$ total weight, and these can all be done in $O(n^2)$ time.  Let $p_1$ be the point $a_i$ with smallest corresponding $r_i$, and let $s_1$ be the corresponding $r_i$.  Clearly, $r_i \le 2r$ and the total weight of the points in the ball of radius $r_i$ around $p_1$ is at least $y$.  Remove all the points in this ball.  Repeat this procedure (for the same $y$, not $\alpha$ times the new total weight) until we have $p_1, ..., p_{\ell}$ and the remaining points have weight less than $y$.  This procedure clearly takes $O(\alpha^{-1} n^2)$ time.  
    
    Suppose some ball $B$ contains at least $\alpha w$ weight but does not intersect a ball of radius $s_i$ around $r_i$ for any $i$ such that $s_i \le 2r.$  Then, suppose $j$ is the largest integer such that $s_i \le 2r$ for all $i \le j.$  Either $j = \ell$ or $s_{j+1} > 2r.$  If $j = \ell,$ then the remaining points have weight less than $y$, which makes no sense since $B$ has weight at least $y$ and does not intersect any of the balls we created.  If $s_{j+1} > 2r,$ we would have picked a different ball.  This is because if $a_k \in B,$ the ball of radius $2r$ around $a_k$ contains at least $\alpha w$ weight, so we would have picked $a_k$ as our point $p_{j+1}$ instead.  Thus, we are done.
\end{proof}

\section{A Normed Vector Space where the ``Median of Each Coordinate'' Algorithm Fails} \label{MedianFail}

Here, we note that the simple median algorithm described in Theorem \ref{MedianOfEachCoordinate} does not work in certain normed vector spaces even when coordinates are given.  Specifically, we note that this algorithm fails for square matrices with operator norm distance.

Consider the normed vector space of $\sqrt{d} \times \sqrt{d}$-dimensional matrices with the operator norm distance.  Distances and vector space operations can easily be done in $O(d)$ time, and therefore, Theorem \ref{EuclideanAboveHalf} can be used to provide an $O(nd)$ time algorithm to solve $1$-center clustering with outliers when $\alpha > \frac{1}{2}$.  However, while we can compute the median of each coordinate in $O(nd)$ time, this algorithm fails to be an $O(1)$-approximation.  We formally state and prove it now:

\begin{theorem}
    Let $d$ be a perfect square.  There exist $n = 2^d-1$ matrices $a_1, ..., a_n \in \BR^{\sqrt{d} \times \sqrt{d}}$ such that $n-o(n)$ of the points $a_1, ..., a_n$ are in the operator norm ball centered around the origin with radius $d^{1/3},$ but the algorithm from Theorem \ref{MedianOfEachCoordinate} returns a point with operator norm $\sqrt{d}$.  Consequently, the algorithm from Theorem \ref{MedianOfEachCoordinate} cannot work for this vector space for any $0 < \alpha < 1$.
\end{theorem}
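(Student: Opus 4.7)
The plan is to construct an explicit counterexample using sign matrices. Set $m = \sqrt{d}$ and let $\{a_1, \ldots, a_n\}$ consist of all $2^d$ matrices in $\{-1, +1\}^{m \times m}$ except the all-negative-ones matrix $-J$, so $n = 2^d - 1$. In any coordinate $(i,j)$, exactly $2^{d-1}$ of the $2^d$ sign matrices take value $+1$ and $2^{d-1}$ take value $-1$; after deleting $-J$ (which equals $-1$ in every entry), the counts become $2^{d-1}$ and $2^{d-1} - 1$, so with unit weights the coordinate-wise median is $+1$ everywhere. Hence the algorithm of Theorem~\ref{MedianOfEachCoordinate} returns the all-ones matrix $J = \mathbf{1} \mathbf{1}^T$, which is rank one with singular value $m$, and therefore has operator norm $\sqrt{d}$ exactly as claimed.

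Next I would show that $n - o(n)$ of these matrices lie in the operator-norm ball of radius $d^{1/3}$ around the origin. This reduces to the following concentration fact for a uniformly random $\pm 1$ matrix $M$ of size $m \times m$: by standard spectral-norm bounds (see \cite{Tao}), $\|M\|_{op} = (2 + o(1))\sqrt{m}$ with probability $1 - o(1)$. Since $d^{1/3} = m^{2/3} \gg \sqrt{m}$, we have $\Pr[\|M\|_{op} > d^{1/3}] = o(1)$, so the deterministic count of sign matrices with $\|M\|_{op} > d^{1/3}$ is at most $o(2^d) = o(n)$. Removing a single matrix $-J$ does not affect this asymptotic, so at least $n - o(n)$ of the constructed matrices have operator norm at most $d^{1/3}$.

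For the ``consequently'' clause, fix $\alpha \in (0, 1)$ and any constant $C \ge 1$. The origin witnesses that a ball of radius $r = d^{1/3}$ contains at least $(1 - o(1))\, n \ge \alpha n$ points for large enough $d$, so the true radius is at most $d^{1/3}$. For the output $J$ to be a $C$-approximate center, the ball of radius $Cd^{1/3}$ around $J$ would need to contain $\alpha n$ points. However, any matrix $A$ with $\|A\|_{op} \le d^{1/3}$ satisfies $\|J - A\|_{op} \ge \|J\|_{op} - \|A\|_{op} \ge \sqrt{d} - d^{1/3}$, which exceeds $Cd^{1/3}$ once $d$ is sufficiently large. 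Hence only $o(n)$ of the constructed matrices lie within $Cd^{1/3}$ of $J$, ruling out any fixed approximation constant and completing the argument.

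The main obstacle is invoking a clean spectral-norm concentration bound strong enough to control the deterministic count $|\{A \in \{-1,+1\}^{m \times m} : \|A\|_{op} > d^{1/3}\}|$; the rest is a routine triangle-inequality computation plus the observation that $-J$ is the unique choice of deletion that forces every coordinate-wise median to $+1$. I would either invoke the classical $(2+o(1))\sqrt{m}$ spectral bound from \cite{Tao} directly, or reprove the needed qualitative statement via an $\epsilon$-net on the unit sphere combined with Hoeffding's inequality applied to the bilinear forms $u^\top M v$.
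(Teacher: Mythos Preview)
Your proposal is correct and follows essentially the same approach as the paper: the identical construction (all $\pm 1$ sign matrices except $-J$), the same median-count argument forcing the output to be $J$, and the same appeal to spectral-norm concentration for random sign matrices from \cite{Tao}. Your treatment of the ``consequently'' clause is in fact slightly more careful than the paper's, which simply notes that $\sqrt{d} = \omega(d^{1/3})$; your triangle-inequality argument showing that the ball of radius $Cd^{1/3}$ around $J$ captures only $o(n)$ points makes the failure of any fixed approximation constant fully explicit.
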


\begin{proof}
    Define $k = \sqrt{d}.$  Consider the set of all $k \times k$ matrices with each coordinate $\pm 1,$ with the exception of the matrix with all coordinates $-1$.  There are clearly $2^d-1$ such matrices, and it is easy to see that the median of each coordinate equals $1$.
    
    The operator norm of the matrix with all $1$'s equals $k = \sqrt{d}$.  However, it is known that if a matrix $M \in \BR^{k \times k}$ is chosen with i.i.d. entries from $\pm 1,$ the probability of its operator norm being more than $k^{2/3}$ is $O(e^{-\Omega(k^{7/6})}) = o(1)$ (see for example \cite[Corollary 2.3.5]{Tao}).
    Since our set $a_1, ..., a_n$ contains every single matrix with entries in $\pm 1$ with equal probability, except the matrix of all $1$'s, we have that $1-o(1)$ fraction of the $a_i$'s have operator norm less than $k^{2/3} = d^{1/3},$ which is precisely what we want.
    
    The last statement of the theorem is now immediate, since the algorithm returns a point with operator norm $d^{1/2} = \omega(d^{1/3})$, so it does not provide an $O(1)$-approximation.
\end{proof}

\end{document}